
\documentclass[a4paper,12pt]{article}
\usepackage{localeng}
\ifxetex
\else
\usepackage{mathptmx}
\fi
\usepackage[margin=25mm]{geometry}

\DeclareMathOperator{\KS}{\mathrm{C}\mskip 0.4mu}

\DeclareMathOperator{\E}{\textsf{E}\mskip 0.4mu}
\newcommand{\cnd}{\mskip 1mu | \mskip 1mu}

\newtheorem{theorem}{Theorem}
\newtheorem{proposition}{Proposition}
\newtheorem{lemma}{Lemma}
\theoremstyle{definition}
\newtheorem{definition}{Definition}
\theoremstyle{remark}
\newtheorem{remark}{Remark}
\let\eps=\varepsilon

\title{Inequalities for space-bounded Kolmogorov complexity\thanks{
Authors want to thank the members of the ESCAPE team (especially Ruslan Ishkuvatov), the participants of the Kolmogorov seminar (Moscow) and Algorithmic Randomness workshop for discussions. We are grateful to anonymous reviewers for STACS2021 conference (where the submission was rejected) who suggested many corrections and improvements.}}

\author{ Bruno Bauwens\thanks{National Research University Higher School of Economins, ORCID 0000-0002-6138-0591. Supported by Russian Science Foundation (grant 20-11-20203)}, Peter G\'acs\thanks{Boston University, \texttt{gacs@bu.edu}, ORCID 0000-0003-2496-0332}, Andrei Romashchenko\thanks{\protect\raggedright LIRMM, University of Montpellier, CNRS, Montpellier, France and IITP RAS, Moscow (on leave), \protect\url{https://www.lirmm.fr/~romashchen/}, \texttt{andrei.romashchenko@lirmm.fr}, ORCID 0000-0001-7723-7880. Supported by ANR-15-CE40-0016 RaCAF and RFBR 19-01-00563 grants. Supported by ANR grant FLITTLA},
Alexander Shen\thanks{LIRMM, University of Montpellier, CNRS, Montpellier, France and IITP RAS, Moscow (on leave),  \protect\url{www.lirmm.fr/~ashen}, \texttt{alexander.shen@lirmm.fr}, ORCID 0000-0001-8605-7734. Supported by ANR-15-CE40-0016 RaCAF  and RFBR 19-01-00563 grants. Supported by ANR grant FLITTLA. Part of the work was done while participating in the American Institute of Mathematics Workshop on Algorithmic Randomness (August 2020)}}
\date{}

\begin{document}

\maketitle

\begin{abstract}
  Finding all linear inequalities for entropies remains an important open question in information theory. For a long time the only known inequalities for entropies of tuples of random variables were Shannon (submodularity) inequalities. Only in 1998 Zhang and Yeung~\cite{yeung-zhang} found the first inequality that cannot be represented as a convex combination of Shannon inequalities, and several other non-Shannon inequalities were found soon after that. It turned out that the class of linear inequalities for entropies is rather fundamental, since the same class can be equivalently defined in terms of subgroup sizes or projections of multidimensional sets (Chan, Yeung~\cite{chan, chan-yeung}, Romashchenko, Shen, Vereshchagin~\cite{romashchenko2000}). The non-Shannon inequalities have interesting applications (e.g., to proofs of lower bounds for the information ratio of secret sharing schemes). Still the class of linear inequalities for entropies is not well understood, though some partial results are known (e.g., Mat\'{u}\v{s} has shown in \cite{matus} that this class cannot be generated by a finite family of inequalities).
 
This class also appears in algorithmic information theory: the same linear inequalities are true for Shannon entropies of tuples of random variables and Kolmogorov complexities of tuples of strings (Hammer et al., \cite{romashchenko1997}). This parallelism started with the Kolmogorov--Levin formula~\cite{kolmogorov1968} for the complexity of pairs of strings with logarithmic precision. Longpr\'e  proved in \cite{longpre-thesis} a version of this formula for the space-bounded complexities.

In this paper we prove a stronger version of Longpr\'e's result with a tighter space bound, using Sipser's trick~\cite{sipser}. Then, using this result, we show that \emph{every linear inequality that is true for complexities or entropies, is also true for space-bounded Kolmogorov complexities with a polynomial space overhead}, thus extending the parallelism to the space-bounded algorithmic information theory.
\end{abstract} 
\clearpage

\section{Space-bounded Kolmogorov complexity}

Kolmogorov in his seminal paper of 1965~\cite{kolmogorov1965} defined the complexity of a finite string as the minimal length of a program that produces this string:
\[
\KS_I(x)=\min \{|p|: I(p)=x\}.
\]
Here $I$ is a machine (considered as an interpreter of some programming language), $p$ is a binary string (considered as a program without input), and $|p|$ is its length. In a similar way Kolmogorov defined $\KS_I(y\cnd x)$, the conditional complexity of $y$ given $x$, as the minimal length of a program $p$ that transforms $x$ to $y$: 
\[
\KS_I(y\cnd x)=\min \{|p|: I(p,x)=y\}.
\]
In this case the interpreter $I$ has two arguments (considered as a program and an input for this program).

There exists an interpreter that is optimal to within an additive constant (Solomonoff, Kolmogorov). Different optimal interpreters lead to complexity functions that differ at most by an $O(1)$ additive term. So the complexity can be considered as an intrinsic property of the strings involved. Complexity measures the amount of information in individual finite objects, not random variables (distributions) as Shannon's information theory does. The relation between complexities of strings and entropies of probability distributions is well established: in particular it is shown in~\cite{romashchenko1997} that the same linear inequalities hold for both.

It is easy to see that complexity functions are not computable; moreover, they do not have non-trivial computable lower bounds.
This fact is the basis for Chaitin's famous proof of G\"odel's incompleteness theorem~\cite{chaitin}.

To information theorists, the non-computability of the complexity functions may obscure somewhat their combinatorial significance. A natural approach to the question is to consider versions of Kolmogorov complexity in which the interpreter has some resource (time, space) bounds. This makes the complexity functions computable since now for each program we can run it until it produces some result or exceeds the bound (if the latter does not happen for a long time, we know that there is a loop and the program will never terminate). However, the inequalities for these resource bounded versions become more complex, with different resource bounds on the two sides. In this paper we show a way to overcome these difficulties for the case of space bounds. We will see that each linear inequality holding for entropies holds also for many space-bounded versions of Kolmogorov complexity: the space bound can be chosen from a dense infinite hierarchy of possibilities.

From a more pragmatic point, one could add that unrestricted complexity is not only non-computable, but also irrelevant: if some string has a short program but the time needed to run this program is huge, this string for all practical purposes may be indistinguishable from an incompressible one.

Kolmogorov was aware of these issues: in the last paragraph of~\cite{kolmogorov1965},
he writes that the description complexities introduced above
\begin{quote}
``\ldots have one important disadvantage: They do not take into account the \emph{difficulty} of transforming a program $p$ and an object $x$ into an object~$y$.\footnote{The English translation mentions the ``difficulty of preparing a program $p$ for passing from an object $x$ to an object $y$''. However, the original Russian text is quite clear: Kolmogorov speaks about the complexity of \emph{decompression}, not compression.} Introducing necessary definitions, one can prove some mathematical statements that may be interpreted as the existence of objects that have very short programs, so their complexity is very small, but the reconstruction of an object from the program requires an enormous amount of time. I plan to study elsewhere\footnote{Unfortunately, Kolmogorov did not publish those ``mathematical statements'' about resource-bounded complexity (though he gave some talks on this topic), and his ideas about algorithmic statistics, as the subject is known now, were understood only much later. It turned out that the dependence of $K^t(x)$ on $t$ (if the resource bound $t$ is measured in ``busy beaver units'') gives, for every string $x$, some curve that can be equivalently defined in terms of Kolmogorov structure function, $(\alpha,\beta)$-stochasticity, or two-part descriptions. See \cite{vereshchagin-shen,vereshchagin-shen-long} for a survey of algorithmic statistics.} the dependence of the necessary program complexity $K^t(x)$ on the allowed difficulty $t$ for its transformation into~$x$. Then the complexity $K(x)$ as defined earlier will be the minimum of $K^t(x)$ for unbounded $t$.''\cite[p.~11]{kolmogorov1965}\footnote{Kolmogorov used the notation $K(x)$ for complexity function; now it is usually denoted by $\KS(x)$, while the notation $\mathrm{K}(x)$ is used for the so-called ``prefix complexity''. We follow this convention.}.
\end{quote}

Defining $\KS^r(x)$ and $\KS^r(y\cnd x)$ as the minimal length of the programs that generate $x$ or transform $x$ to $y$ with resource bound $r$, we need to fix some computational model and the exact meaning of the resource bound. It is natural to consider time-bounded or space-bounded computations. The study of time-bounded complexity immediately bumps into the P vs. NP problem~\cite{longpre-thesis,longpre-mocas}, so in this paper we consider only the space-bounded version of complexity. 

Usually the space used by a computation is measured up to a constant factor, but we need more precision. So we should fix a computational model carefully. For Turing machines with arbitrary tape alphabet one should take into account not only the number of cells used but also the alphabet size. If each cell may contain one of $k$ symbols (where $k\ge 2$), then one should multiply the number of used cells by $\log_2 k$. This makes the Turing machine model ``calibrated'' in the following sense: the number of configurations with space not exceeding $s$, is close to $2^s$. In fact, it differs from $2^s$ by a polynomial (in $s$) factor, since we have to take into account the head position (or heads positions for multitape Turing machines). The simulation between models, in our case, the emulation of multitape machines on machines with smaller number of tapes, uses $O(\log s)$ overhead for space $s$ computations, so the space bounds do not depend on the choice of the model up to logarithmic additive terms (this precision is much better than for time bounds).

We need to specify also how the machine gets the input string (strings) and how it produces the output string. If input/output is written on the tape, then the space used by the computation cannot be less than the input/output length. To avoid this artificial restriction, one usually assumes that separate tapes are used for input and output, and make them read- and write-only (so they cannot be used for computations). If we switch from this model to the worktape-only model, we get, in addition to $O(\log s)$, also $O(\text{input/output size})$ space overhead.

For technical reasons, in this paper we use a specific and a bit unusual computation model (finite-state automaton plus two stacks, see below). The results obtained for this model remain valid for multitape Turing machines, but sometimes in a slightly weaker form, namely, with additional $O(\log s)$ terms that appear when we switch from one model to another.\footnote{Strangely enough, it seems that this specific model is essential in our proofs and we do not know how to avoid it even if we agree to have additional $O(\log s)$ terms in our results.} 

Our machines have 
\begin{itemize}
\item one-sided one-directional read-only input tapes with end markers\footnote{For the program tape, the use of the end markers means that we consider plain complexity, not the prefix one (that requires that the interpreter finds by itself where the program ends). However, we allow logarithmic terms in our inequalities for complexities, so the difference between plain and prefix complexity is not important for us.} (one or two tapes, depending on the number of inputs);
\item one-sided one-directional write-only output tape; 
\item two binary stacks (with \textsc{push}/\textsc{pop}/\textsc{empty} requests) as memory devices.
\end{itemize}
All these devices are connected to the finite-state control unit. The tape alphabet is binary, and the sum of the stack lengths is considered as the space measure. 

Note that two stacks are equivalent to a finite tape that can be extended (by inserting an empty cell) or contracted (by deleting a cell) near the head of the Turing machine; the stacks correspond to the parts of the tape on the left and on the right of the head. The head of such a machine knows whether it is at the first/last cell of the tape, and can insert an empty cell or delete a cell before/after the current one.\footnote{To simulate such a tape (with insertions and deletions) on a standard tape, a $O(\log s)$ space overhead is needed: when inserting a cell, we need to move information along the tape to make space for the new cell, and for that we need $O(\log s)$ additional space for counters (or a special marker symbol that makes the alphabet bigger, so the overhead is even worse).}
\begin{center}
\includegraphics[scale=1.05]{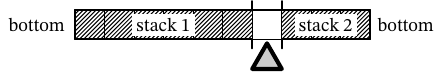}
\end{center}

\begin{definition}
Let $I$ be a machine of the described type with two one-directional read-only input tapes, one one-directional write-only output tape and two binary stacks. We say that $I^s(p,x)=y$ if machine $I$ with inputs $p$ and $x$ produces $y$ and the total length of the two stacks never exceeds $s$ during the computation. We define the conditional space-bounded Kolmogorov complexity as
\[
\KS_I^s(y\cnd x)=\min\{ |p| : I^{s}(p,x)=y\}.
\]
The unconditional version is obtained when the condition $x$ is the empty string.
\end{definition}

Then, as usual, we need a version of the Kolmogorov--Solomonoff universality theorem that says that there exists an optimal machine making the complexity minimal. Now the space bounds should be taken into account, and for our model $O(1)$ additional space is enough:

\begin{proposition}\label{th:optimality-space}
There exists an optimal machine $V$ such that for every machine $P$ there  exists a constant $c$ such that
\[
\KS_V^{s+c}(y\cnd x)\le \KS^s_P(y\cnd x)+c
\]
for all $x,y$.
\end{proposition}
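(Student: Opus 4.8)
The plan is to follow the classical Kolmogorov--Solomonoff invariance argument, but to control the simulation overhead in \emph{space} so that it is an additive constant rather than the usual logarithmic or multiplicative blow-up. First I would fix a self-delimiting (prefix-free) encoding $P\mapsto\langle P\rangle$ of machines of the described type, and let $V$ be the machine that, given program $\langle P\rangle p$ on its first input tape and condition $x$ on its second input tape, reads off the prefix $\langle P\rangle$ and then simulates $P$ on program $p$ and input $x$. Because the program-input tape is read-only and one-directional, $V$ cannot revisit $\langle P\rangle$ once its head has passed it; so the opening step is for $V$ to copy $\langle P\rangle$ onto its work tape (to the far left, delimited by markers). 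The prefix-freeness is what lets $V$ detect, while reading one-directionally, where $\langle P\rangle$ ends, and after this copy the program-input head of $V$ sits exactly at the first symbol of $p$.

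For the simulation itself, $V$ lays out the work tape of $P$ on its own work tape immediately to the right of the stored copy of $\langle P\rangle$, keeps the current state of $P$ in a constant-width field adjacent to the simulated head, and forwards $P$'s one-directional reads of $p$ and of $x$ to $V$'s two one-directional input tapes (whose heads thereby stay synchronized with those of $P$) and $P$'s output directly to $V$'s output tape. To carry out one step of $P$, $V$ consults the transition table stored in the $\langle P\rangle$ region: since the work tape is realized by two stacks, i.e.\ an insertable tape, $V$ can travel from the simulated head position back to that region and return, dragging the constant-width state field, without ever using additional cells---only time is spent. Hence the space used by $V$ equals the space $s$ used by $P$ plus the constant $|\langle P\rangle|+O(1)$ accounting for the stored description, the state field, and a few markers; note that because $V$ and $P$ are machines of the \emph{same} model, the two stacks of $P$ are simulated by the two stacks of $V$ with no extra $O(\log s)$ term.

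Setting $c$ to be the larger of this additive space overhead and the length $|\langle P\rangle|$ then yields both $|\langle P\rangle p|\le\KS^s_P(y\cnd x)+c$ and the space bound $s+c$, which is exactly the claimed inequality. The step I expect to be the crux---and the only place where the chosen model really matters---is precisely the verification that every resource $V$ spends beyond those of $P$ is bounded by a constant depending on $P$ alone. Three features conspire to make this true: the one-directional read-only inputs and write-only output free $V$ from ever storing $p$, $x$, or $y$; the machine description $\langle P\rangle$ has constant length independent of $s$; and the two-stack (insertable-tape) realization lets $V$ re-access that description arbitrarily often at the cost of time only. Any model forcing $V$ to record head positions or tape lengths explicitly, or to keep copies of the inputs on its work tape, would replace this $O(1)$ overhead by an $O(\log s)$ or input-dependent term, which is exactly why the computational model was fixed as it was.
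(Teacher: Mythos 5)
Your overall plan (a universal machine, a self-delimiting encoding of $P$, and a model-specific argument that the simulation overhead is an additive constant) matches the paper's, and you correctly identify the crux: every resource $V$ uses beyond $P$'s space must be bounded by a constant depending on $P$ alone. But your simulation layout fails at exactly that crux. You store $\langle P\rangle$ at a fixed location (the far left of the work tape) and have $V$ travel from the simulated head position to that region and back at every simulated step, claiming this costs ``only time''. In the stated model this return trip cannot be implemented in constant extra space: the tape between the description region and the simulated head is, by your own design, a verbatim copy of $P$'s work tape, i.e.\ arbitrary binary data. To come back, $V$ must either remember the distance travelled (that is $\Theta(\log s)$ bits, which you are not allowed), or recognize the place where it has to stop; but with a binary alphabet any $O(1)$-cell marker, and likewise any boundary of the dragged state field, is a binary pattern that can also occur spuriously inside $P$'s data, so a finite-state head scanning back has no reliable way to stop. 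A counting argument makes this concrete: such a procedure would have to encode the pair (tape content, head position) injectively into the pair (marked tape content, control state), and with only $k=O(1)$ inserted cells this is impossible once $s>2^k\cdot|Q_V|$. Note also that $V$ cannot hide $P$'s current state or any other $P$-dependent quantity in its finite control, since $V$ is a single fixed machine; this is why the state field must live on the tape in the first place, and why it cannot simply be ``carried'' in memory during the trip.

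The paper's proof avoids this issue by the architectural choice that is the real content of the proposition: the description $r$, the simulated state, and a constant amount of workspace form a \emph{simulation block} that is moved along the tape together with the simulated head, stored at low density so that its internal delimiters are unambiguous to a head that never leaves the block. Since $V$'s head always stays inside the block, no position in the surrounding (arbitrary) data ever needs to be marked or re-found: bits of $P$'s tape are only popped and pushed at the block's edges, one at a time, and carried through the block in the finite control. Your proposal would be repaired precisely by replacing the fixed far-left description region with such a travelling block; as written, the step you single out as ``the only place where the chosen model really matters'' is the step that does not go through.
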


Both machines $P$ and $V$ are of the type we described (we consider only machines of this type if not stated otherwise). We use the same $c$ both for the space overhead and the complexity increase, but this obviously does not matter.

\begin{proof}
Recall a usual construction of a universal machine that can simulate the behavior of an arbitrary machine $M_r$ (an arbitrary finite-state program for the control unit) given its description $r$, for arbitrary inputs $p,x$. Here $r$ is a binary string that describes some machine $M_r$.

We modify this construction to get the required optimal machine $V$. Let us double each bit in $r$ and denote the result by $\overline{r}$. The  machine $V$ is defined in such a way that
\[
V(\overline{r}01p,x)=M_r(p,x).
\] 
  This guarantees that if $r$ is the description of machine $P$ (i.e., $M_r=P$), then
\[
\KS_{V}(y\cnd x)\le \KS_P(y\cnd x)+ 2|r|+2,
\]
since for every $P$-program $p$ we have an equivalent $V$-program $\overline{r}01p$. Therefore, the complexity increase when switching from $P$ to $V$ is $O(1)$. This would be enough if we did not care about the space bounds. But now we need to describe in more details what the simulating machine $V$ does, and check that $V$ uses only $O(1)$ additional space compared to~$M_r$, where $O(1)$-constant may depend on $r$ but not on $p$ and $x$.

  We start with a general remark about our computational model. Let us add an auxiliary tape (a finite read-write tape with insertions/deletions, as explained above) to the two-stack machine we described.\footnote{This is equivalent to adding two more stacks, so we get a machine with four stacks of the same kind. Still in the following lemma the two new stacks are treated differently. Namely, their length is taken into account with some constant factor, so it is more convenient to speak about two stacks and one tape, even if this tape is equivalent to two other stacks.} 

\begin{lemma}\label{lem:sim}
Every machine $M$ of this enhanced type can be simulated by a two-stack machine $M'$ in such a way that at every moment of the computation the space used by $M'$ is bounded by $s_1+O(s_2)$, where $s_1$ is the total length of stacks of $M$, and $s_2$ is the number of cells on the tape at the same moment.
\end{lemma}

\begin{proof}[Proof of Lemma~\ref{lem:sim}]

Let us encode the contents of $M$'s tape in some way (discussed later), and put this encoding between the contents of two stacks of $M$ (reproduced literally, without any encoding). 

\begin{center}
\includegraphics[scale=1.05]{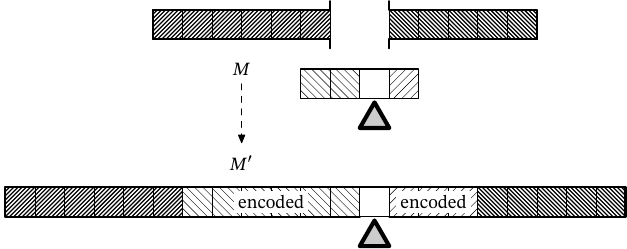}
\end{center}

This will be the contents of the (insertable/deletable) tape of $M'$, and by a \emph{special zone} of this tape we mean the part occupied by the encoding. 

  The head of $M'$ is always in the $O(1)$-neighborhood of the special zone. When $M$ performs an operation on its stack (left or right), $M'$ moves its head to the corresponding endpoint of the special zone and simulates the required operation. When $M$ performs an operation on the tape, $M'$ finds the place in the special zone that corresponds to the head position, and simulates the required operation.

We need an encoding that makes all these operations possible.  For example, we may encode each bit on the tape of $M$ by a group of three identical bits ($000$ or $111$) on the tape of $M'$. Then we use three other $3$-bit blocks (out of $6$ remaining) as left and right endmarkers for the special zone, and as a marker that indicates the position of the $M$-head. The alignment information (position of the $M'$-head modulo $3$) is kept in the finite memory of $M'$. Then $M'$ can distinguish the markers from the encoded bits and find the place it needs (the endpoint or the $M$-head position). 

The space bound for the simulation\footnote{In fact, we may use better encoding and replace $O(s_2)$ by $s_2+O(\log s_2)$, but this is not needed for our purposes.} is easy to check: $M$'s stacks are copied without any overhead, and each bit on $M$'s tape uses $O(1)$ bits in the encoding. We also use $O(1)$ bits for three markers, but this term is absorbed by~$O(s_2)$. 
\end{proof}

Now we describe machine $V$ that uses an additional tape (and then apply Lemma~\ref{lem:sim}). The machine $V$ starts by reading $\overline{r}$ and writing $r$ on its tape, then it skips the separator $01$ and leaves $p$ on the input tape (while $x$ is kept unchanged on the other input tape). Then $V$ executes the program $r$ written on its tape, reading $p$ and $x$ and manipulating the stacks according to $r$'s instructions. For that $V$ needs some additional space on the tape to keep the current state of the simulated program and other information. This space depends only on $r$ but not on $p$ and $x$. It remains to apply Lemma~\ref{lem:sim} to construct an equivalent machine with two stacks; the term $O(s_2)$ appearing in this Lemma depends only on $r$ as required.
\end{proof}

\begin{remark}
  Note that in this argument we used that the input tape is one-directional. Still the result remains valid if we write the input on a bidirectional read-only tape with two endmarkers. In this case we need to distinguish during the simulation whether the input head is inside $p$ or not, but this can be checked by going left by $O(|r|)$ cells and coming back. Note that we have $r$ on the work tape and there is enough space to keep the numbers of size $O(|r|)$.

The other non-standard feature of our model is that it uses two stacks instead of a normal tape. But this feature is not important. We can adapt the argument to standard Turing machines: since the size of the self-delimited block does not change during the simulation, this block can be moved along the normal tape (no cell insertions) without moving the information outside the block (this would happen if the block changed its size on a normal tape).
\end{remark}
 
\begin{remark}\label{remark:timeSimulation}
The same construction works for time bounds (instead of space bounds), but we would get a constant factor instead of an additive constant:
\[
\KS_V^{c\cdot t}(y\cnd x)\le \KS^t_P(y\cnd x)+c,
\]
where $\KS_V^t$ stands for the time-bounded complexity (defined in a similar way). Indeed, each step of a simulated machine now requires several steps of the simulating machine, and the number of these steps is bounded by a constant that depends only on $r$, but not on $p$ and $x$.
\end{remark}
 
Now we fix some optimal machine  $V$, call the corresponding function $\KS_V^s(y\cnd x)$ the \emph{space-bounded Kolmogorov conditional complexity function} and denote it by $\KS^s(y\cnd x)$. The unconditional space-bounded complexity $\KS^s(x)$ can be defined then as $\KS^s(x\cnd \eps)$ for the empty condition $\eps$. It is easy to see that we get an equivalent definition of unconditional complexity if we consider machines $V$ that use only one input tape. Proposition~\ref{th:optimality-space} guarantees that these notions are invariant (do not depend on the choice of the optimal machine) up to $O(1)$ changes in the complexity and the space bound.

\section{Space-bounded complexity of pairs}\label{sec:sym}

The Kolmogorov--Levin theorem (formula for the complexity of pairs, \cite{kolmogorov1968}) says that 
\[
\KS(x,y) = \KS(x)+\KS(y\cnd x)+O(\log\KS(x,y)).
\]
Here $\KS(x,y)$ is the complexity of a pair of strings that is defined as the complexity of some computable encoding for it. For the unbounded complexity the choice of encoding is not important, since any computable transformation changes the unbounded complexity only by an $O(1)$ additive term. For the space-bounded version this is no longer the case, and we define the complexity $\KS^s(x,y)$ as $\KS^s(\overline{x}01y)$, where $\overline{x}$ is $x$ with doubled bits. This encoding of a pair $(x,y)$ treats $x$ and $y$ in different ways, so the natural question is whether the pair complexity as defined above is reasonably robust, e.g., does not change too much when we exchange $x$ and $y$. The following proposition answers this question; note that the space overhead is no more a constant, but is proportional to the size of $x$ and $y$.

\begin{proposition}\label{prop:pairs}
\[
\KS^{s+O(|x|+|y|)}(y,x) \le \KS^s(x,y)+O(1)
\]
for all $s,x,y$.
\end{proposition}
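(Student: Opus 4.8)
The plan is to exhibit a single machine $P'$ that, reading the \emph{same} program $p$ that is optimal for $(x,y)$, outputs the swapped encoding $\overline{y}01x$ within the enlarged space bound. The program length is then essentially unchanged, and the $O(1)$ loss comes only from the final universality step (Proposition~\ref{th:optimality-space}).

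First I would describe $P'$ on input $p$. It runs the optimal machine $V$ on $p$, with one modification: every bit that $V$ would send to the write-only output tape is instead appended to a dedicated region of the work tape. Since $V^s(p)=\overline{x}01y$, after the simulation halts this region holds the string $\overline{x}01y$, which costs $O(|x|+|y|)$ additional cells on top of the $s$ cells used by the simulation of $V$ itself. So far $P'$ uses $s+O(|x|+|y|)$ space.

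Next, $P'$ parses the captured string. Because each bit of $x$ was doubled, the prefix $\overline{x}$ consists entirely of the pairs $00$ and $11$ (read at even offsets), so the first pair equal to $01$ is exactly the separator; everything before it is $\overline{x}$ and everything after it is $y$. Having located $x$ and $y$ on the work tape, $P'$ forms the swapped encoding $\overline{y}01x$ (double the bits of $y$, append $01$, append $x$) and writes it to the real output tape. All of this string manipulation stays within $O(|x|+|y|)$ work-tape cells, so the whole run of $P'$ fits in $s+O(|x|+|y|)$ space.

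Finally I would assemble the bound. Taking $p$ to be an optimal program with $V^s(p)=\overline{x}01y$ and $|p|=\KS^s(x,y)$, the machine just built satisfies $(P')^{s+O(|x|+|y|)}(p)=\overline{y}01x$, so $\KS_{P'}^{s+O(|x|+|y|)}(y,x)\le \KS^s(x,y)$. Applying Proposition~\ref{th:optimality-space} to pass from $P'$ back to the optimal machine $V$ adds a constant $c$ to both the space bound and the length, giving $\KS^{s+O(|x|+|y|)}(y,x)\le\KS^s(x,y)+O(1)$, as required. The only delicate point is the one-directional write-only output tape: since $V$'s printed bits cannot be revisited, it is essential to intercept $V$'s output and buffer the entire string $\overline{x}01y$ on the work tape before rearranging it, and the cost of this buffer is precisely the $O(|x|+|y|)$ space overhead that appears in the statement.
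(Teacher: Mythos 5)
Your proposal is correct and follows essentially the same route as the paper's own proof: both modify the optimal machine so that its output bits are buffered in a zone on the work tape (costing $O(|x|+|y|)$ extra space, which the insertable-tape/two-stack model lets you maintain alongside the simulated computation), then re-emit the pair in swapped encoding $\overline{y}01x$, and finally invoke Proposition~\ref{th:optimality-space} to absorb the change of machine into an $O(1)$ additive term. The parsing detail you spell out (locating the separator $01$ at even offsets) and the closing remark about the write-only output tape forcing the buffering are exactly the considerations implicit in the paper's sketch.
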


As usual, this means that there exists some $c$ such that $\KS^{s+c(|x|+|y|+1)}(y,x)$ is bounded by $\KS^s(x,y)+c$ for all $s$, $x$, and $y$. We add here $1$ to take care for the special case $|x|=|y|=0$. In this case $x=y$ and the statement is vacuous, but in other statements it could be important. We agree that everywhere the $O(\ldots)$ notation allows $O(1)$ terms, too.

\begin{proof}
  As for the unbounded case, we consider an optimal machine $V(p)$, and then transform it into a machine $\hat V$ that exchanges the pair elements in $V(p)$: if $V(p)=\overline x 01 y$, then $\hat V(p)=\overline y 01 x$. Then we apply Proposition~\ref{th:optimality-space} to the machine $\hat V$. The only thing we need is to be sure that the $\hat V(p)$ computation can be performed in space $s+O(|x|+|y|)$, if $V(p)$ produces $\overline x01y$ in space $s$.

Again we use Lemma~\ref{lem:sim} and equip the machine $\hat V$ with an additional tape; we need only to remember that the space used on this tape is counted with some constant factor. Instead of writing the bits of $\overline x$ on the output tape like $V$ does, the machine $\hat V$ writes them (or just bits of $x$ without duplication) on the work tape, until the next two-bit block is $01$. After that all output bits of $V$ (i.e., bits of $y$) are doubled, so $\overline{y}$ is printed on the output tape. When $V$ terminates, $\hat V$ prints $01$ and after that copies the bits of $x$ from the tape.

It remains to note that our transformation does not change the content of the stacks, and the additional space on the tape is $O(|x|+|y|)$ --- in fact, even $O(|x|)$, since we do not need to store $y$.
\end{proof}

\begin{remark}
A similar argument for standard Turing machines (no insertion of cells allowed) would give additional $O(\log s)$ overhead.
\end{remark}

Now the complexity of pairs is defined, and we would like to develop a space-bounded version of Kolmogorov--Levin formula for the complexity of pairs. This formula says that
\[
\KS(x,y)= \KS(x)+\KS(y\cnd x)+O(\log n),
\]
if $x$ and $y$ are strings of length at most $n$, and contains two inequalities, one in each direction. We want to provide the space-bounded counterparts for them. In one direction this is easy to do. One can get even a bit stronger bound that has term $O(\log \KS^s(x))$ instead of $O(\log n)$; note that for the reasonable values of $s$ we have $\KS^s(x)=O(|x|)$.

\begin{proposition}\label{prop:sym-easy}
\[
\KS^{s+O(|x|+|y|)}(x,y)\le \KS^s(x)+\KS^s(y\cnd x)+O(\log \KS^s(x)).
\]
\end{proposition}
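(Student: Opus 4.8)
The plan is to build a single machine that, given a suitable concatenation of an optimal program $p$ for $x$ and an optimal program $q$ for $y$ given $x$, reconstructs $x$, feeds it as the condition for the second computation, and writes out the encoding $\overline{x}01y$ of the pair. Concretely, fix the optimal machine $V$ and let $p$ be a shortest program with $V^s(p)=x$ (so $|p|=\KS^s(x)$) and $q$ a shortest program with $V^s(q,x)=y$ (so $|q|=\KS^s(y\cnd x)$). The combined program consists of a self-delimiting encoding of the number $|p|$, followed by $p$, followed by $q$. Since $|p|=\KS^s(x)$, the self-delimiting prefix costs $O(\log\KS^s(x))$ bits, so the total length is $\KS^s(x)+\KS^s(y\cnd x)+O(\log\KS^s(x))$, which accounts exactly for the logarithmic term in the statement.

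The new machine $M$ operates in two phases. In the first phase it reads the length prefix, then simulates $V$ on the next $|p|$ bits of its program tape (maintaining a counter so that $V$ sees an end marker after exactly $|p|$ symbols) with empty condition; this produces $x$, which $M$ stores in a dedicated zone of its work tape rather than on the output tape. By assumption this simulation uses at most $s$ cells of work space, so the first phase uses $s+O(|x|)$ space. Once $x$ is available, $M$ writes $\overline{x}01$ to the output tape by reading the stored copy of $x$. In the second phase $M$ simulates $V$ on the remaining program bits $q$, now supplying the stored $x$ as the (simulated) condition input and streaming the resulting $y$ directly to the output. Because the work space of the first simulation is released before the second begins, the peak space is the larger of the two phases, namely $s+O(|x|)$; storing $x$ together with a head pointer into it costs $O(|x|)$, so the whole computation fits in $s+O(|x|+|y|)$ space. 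Applying the optimality theorem (Proposition~\ref{th:optimality-space}) to compare $M$ with $V$ changes the bounds only by additive constants, giving the claimed inequality.

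The step I expect to be the main obstacle is the treatment of the condition in the second phase. In the definition of $\KS^s(y\cnd x)$ the string $x$ sits on a read-only input tape and therefore does not count towards the space bound, whereas in the combined machine $x$ is produced internally and must be held on the work tape, where it does count. This is precisely why the space bound must grow by an additive $O(|x|)$ rather than staying at $s$: we cannot provide $x$ ``for free'' as an external input. I would need to verify that reading $x$ from its work-tape copy faithfully simulates the one-directional condition tape of $V$ (a head pointer of $O(\log|x|)$ bits suffices, and it is absorbed into the $O(|x|)$ overhead), and that using a logarithmic length prefix rather than the bit-doubling separator employed for pairs is exactly what keeps the program-length overhead at $O(\log\KS^s(x))$ instead of $\Theta(\KS^s(x))$.
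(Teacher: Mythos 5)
Your proof is correct and follows essentially the same route as the paper's: both encode the two minimal programs as a length-prefixed concatenation (the paper uses $\overline{l}01pq$, you use a self-delimiting encoding of $|p|$ followed by $pq$), both simulate the two computations sequentially while keeping $x$ in a zone carried along the work tape, and both finish by invoking the optimality theorem. The one place where you genuinely diverge is the treatment of $p$ and $q$ themselves: the paper's machine reads and \emph{stores} both programs in the work zone, so its space overhead is $O(|p|+|q|+|x|)$, and it must then dispose of the case where $|p|+|q|$ is not $O(|x|+|y|)$ (possible for small $s$) by falling back on the trivial bound $\KS^{O(|x|+|y|)}(x,y)\le |x|+|y|+O(\log|x|)$. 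You instead \emph{stream} $p$ and $q$ directly off the one-directional program tape, using a counter to simulate the end marker after exactly $|p|$ bits; this is cleaner, and it shrinks the program-dependent space overhead from $O(|p|+|q|)$ down to the $O(\log\KS^s(x))$ bits of the counter. Note, however, that the counter leaves a residue of the same subtlety the paper flags: for pathologically small $s$ one cannot rule out that $\KS^s(x)$ is finite yet so large that $\log\KS^s(x)$ is not $O(|x|+|y|)$, in which case your space accounting no longer fits the promised bound. The same trivial fallback the paper uses (the inequality displayed above, which makes the statement vacuous in that regime) disposes of this case, so your argument needs that one additional sentence to be complete.
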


Note the general structure of this statement: we consider an arbitrary bound $s$ on the right-hand side, and on the left side we have to use a slightly bigger bound (for our model $s+O(|x|+|y|)$ is enough).

\begin{proof}
  Let $p$ and $q$ be the minimal programs for $x$ and for $x\mapsto y$. We need to construct the program for the pair $(x,y)$, i.e., for the string $\overline x 01 y$, whose length will be bounded by $|p|+|q|+O(\log |p|)$. This program will work for some other decompressor $\hat V$, and then we use universality to replace $\hat V$ by $V$.

 The program (description) for the pair $(x,y)$ can be constructed as $\overline{l}01pq$ where $l$ is the length of $p$, written in binary. The decoding machine $\hat V$ again uses an auxiliary tape (and then we use Lemma~\ref{lem:sim}). First the machine $\hat V$ copies $\overline{l}01$ to the auxiliary tape. After that the machine $\hat V$ reads and stores $p$ on the tape.  Note that the length of $p$ (i.e., $l$) is already on the tape, so $\hat V$ knows when to stop reading $p$.  Then $\hat V$ simulates the optimal unconditional decompressor on $p$, reading the bits of $p$ from the auxiliary tape and storing the output bits (i.e., bits of $x$) also on the auxiliary tape. Now $\hat V$ is ready to simulate the computation of the optimal conditional decompressor on $q$, reading the bits of $q$ from the input tape (the rest of the input) and using stored bits of $x$ instead of input bits from its second tape. It is easy to see that we need $O(|p|+|q|+|x|)$ cells on the tape (in fact, $O(|p|+|x|)$ cells).

It is not all we need: there is a technical problem. Namely, for small $s$ we cannot guarantee that $|p|+|q|\le O(|x|+|y|)$. So, the space overhead  $O(|p|+|q|+|x|)$) may not be $O(|x|+|y|)$. However, if $|p|+|q|$ significantly exceeds $|x|+|y|$, then we may use the inequality $\KS^{O(|x|+|y|)}(x,y)\le |x|+|y|+O(\log |x|)$ instead. The latter inequality is obtained if we use $x$ and $y$ instead of $p$ and $q$ in the construction above, and use trivial decompressors instead of the optimal ones.
\end{proof}

\begin{remark}
  In the right hand side of Proposition~\ref{prop:sym-easy} we may replace $O(\log\KS^s(x))$ by $O(\log \KS^s(x,y))$. This is not immediately obvious, because we cannot bound $\KS^s(x)$ by $\KS^s(x,y)$ with exactly the same $s$. But for the ``paradoxical'' case $\KS^s(x,y) < \KS^s(x)$ the entire inequality is obviously true.
\end{remark}

The other direction of the Kolmogorov--Levin formula 
is more difficult (both for unbounded and space-bounded complexity).

\begin{theorem}\label{th:sym}
For all strings $x,y$ and for every number $s$ we have
\begin{equation*}\label{eq:sym}
  \KS^{s'}(x)+ \KS^{s'}(y\cnd x)\le \KS^s(x,y)+ O(\log \KS^s(x,y)),
\end{equation*}
  where $s'=s+O(|x|+|y|)$. 
\end{theorem}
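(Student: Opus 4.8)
The plan is to adapt the classical Kolmogorov--Levin argument to the space-bounded setting, using the counting (or ``slice counting'') idea that underlies the unbounded proof, while carefully tracking the space overhead. Recall the intuition in the unbounded case: if we know $\KS(x,y)$ and also know $\KS(y\cnd x)$, then $x$ can be described by saying which of the pairs with the given complexities it is the first component of. More precisely, fix $m=\KS^s(x,y)$ and consider the set $A$ of all pairs $(u,v)$ with $\KS^s(u,v)\le m$. This set is finite (its size is at most $2^{m+1}$), and crucially it can be \emph{enumerated} within space roughly $s$: we simulate the optimal machine $V$ on all programs of length at most $m$, and each simulation uses space at most $s+O(|u|+|v|)$, reusing the same space for successive programs. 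The key combinatorial observation is that if $x$ has many continuations $v$ with $(x,v)\in A$, then each such $v$ gets a short description given $x$, so $\KS^s(y\cnd x)$ is small; conversely, if $\KS^s(y\cnd x)$ is large, then $x$ has few continuations, which means $x$ itself can be specified cheaply by its index among the first components having that many continuations.

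Concretely, I would set $k$ to be (roughly) $\KS^{s'}(y\cnd x)$ and consider the threshold set $B_k$ of those $u$ such that the number of $v$ with $(u,v)\in A$ is at least $2^k$. A counting argument bounds $|B_k|\le 2^{m-k+O(1)}$, since $A$ has at most $2^{m+1}$ elements and each element of $B_k$ accounts for at least $2^k$ of them. Therefore $x$, if it lies in $B_k$, can be described by $k$ (in $O(\log k)$ bits, which is $O(\log m)$), by $m$ (again $O(\log m)$ bits), and by its ordinal number in the enumeration of $B_k$ (in $m-k+O(1)$ bits), giving $\KS^{s'}(x)\le m-k+O(\log m)$. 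Adding the bound $\KS^{s'}(y\cnd x)\le k+O(\log m)$ (from the fact that given $x$, the string $y$ can be located among the at most $2^k$ continuations) yields the desired inequality~\eqref{eq:sym}, once we verify that $x$ indeed lands in $B_k$ for the appropriate choice of $k$. The Sipser trick is what lets us realize the enumeration and the index-decoding within space $s+O(\log s)+O(|x|+|y|)$ rather than with a blow-up: instead of storing the whole list of enumerated pairs, we recompute them on the fly, keeping only a counter and the current candidate, so the only extra space beyond $s$ is for the counters (of size $O(\log(\text{number of pairs}))=O(m)=O(|x|+|y|)$ after absorbing $m\le |x|+|y|+O(1)$) and the simulation overhead $O(\log s)$.

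The main obstacle, and where the space accounting becomes delicate, is the \emph{enumeration of $A$ within the prescribed space bound}. Simulating $V$ on a single program of length $\le m$ uses space $s+O(|u|+|v|)$, but to enumerate all of $A$ we must run these simulations one after another and, for the counting step, count how many continuations $v$ a fixed $u$ has and determine the rank of $x$ in $B_k$. Each such count requires re-enumerating $A$ (or the relevant slice) from scratch, and we must ensure the outputs $(u,v)$ produced during simulation can be examined and compared without storing them all --- this is exactly where having an insertable work tape (two stacks) and the $O(|x|+|y|)$ output-handling slack from Propositions~\ref{prop:pairs} and~\ref{prop:sym-easy} is essential. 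I would structure the decoder so that it never stores a full list: it keeps $m$, $k$, and the target rank, and makes repeated passes over the enumeration, each pass reusing the same $s+O(|x|+|y|)$ workspace, accumulating only logarithmic-size counters. The second subtlety is the standard one about the precision of $k$: since $\KS^{s'}(y\cnd x)$ is not known exactly in advance, one must either quantify over the correct value of $k$ (paying $O(\log m)$ to encode it) or argue that the true continuation-count of $x$ puts it in $B_k$ for $k$ within $O(1)$ of the optimal threshold; in both cases the loss is absorbed into the $O(\log\KS^s(x,y))$ term. Finally I would check that the logarithmic additive slack $O(\log s)$ in $s'$ suffices to cover the head-position and tape-length bookkeeping inherent in the simulation on the insertable tape, exactly as in the simulation overhead discussed after the definition of the model.
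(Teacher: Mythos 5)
Your counting skeleton (the set $A$ of pairs of complexity at most $m$, the threshold sets $B_k$, the bounds $\KS^{s'}(y\cnd x)\le k+O(\log m)$ and $\KS^{s'}(x)\le m-k+O(\log m)$) is exactly the skeleton the paper uses, but your proposal misses the two technical points that the theorem actually hinges on, and misstates what Sipser's trick is. First: when the decoder simulates a candidate program $p$ on the optimal machine with space bound $s$, the computation may loop forever \emph{within} space $s$, and you never say how this is detected. The straightforward remedy is a step counter running up to $2^{s+O(\log s)}$, but that counter itself occupies $s+O(\log s)$ cells, so the decoder runs in space $2s+O(\log s)+O(|x|+|y|)$ --- this is Longpr\'e's bound, not the claimed $s'=s+O(\log s)+O(|x|+|y|)$. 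Sipser's trick is precisely the device that eliminates this factor of $2$: it decides termination of a space-$s$ computation in space $s+O(\log s)+O(|x|)$ by a backward depth-first search over the tree of configurations that lead to the (unique, cleaned-up) final configuration, with no step counter. What you call ``the Sipser trick'' --- recomputing enumerated pairs on the fly instead of storing lists --- is a different, standard idea; it is also needed (and used in the paper), but it does not solve the termination problem, and without solving it your space accounting does not close.

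Second: your decoders for $x$ and for $y\cnd x$ must enumerate sets defined in terms of $\KS^s$, so as described they need to know $s$ itself. Encoding $s$ costs $O(\log s)$ extra bits, i.e.\ your argument proves
\[
\KS^{s'}(x)+\KS^{s'}(y\cnd x)\le \KS^s(x,y)+O(\log\KS^s(x,y))+O(\log s),
\]
which is strictly weaker than the theorem (the statement allows only $O(\log \KS^s(x,y))$ on the right, and $s$ can be arbitrarily large compared to $m$ and $|x|+|y|$; moreover with the $O(\log s)$ term one cannot recover the unbounded Kolmogorov--Levin formula by letting $s\to\infty$). The paper removes this term by changing the enumeration order: the decoder enumerates over increasing space bounds $s=1,2,3,\ldots$, at stage $s$ listing only those pairs (or first components with large sections) whose space-$s$ complexity meets the bound but whose space-$(s-1)$ complexity does not, so each object is listed once and no knowledge of $s$ is required; the target object appears by stage $s$, and the space consumed up to that moment is still $s+O(\log s)+O(|x|+|y|)$. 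Neither this device nor any substitute for it appears in your proposal, so the $O(\log s)$ loss is a genuine gap, not a bookkeeping detail.
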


Here we use the notation $s'$ for the space bound on the left-hand side to avoid repetitions. The exact meaning of this statement: there exists a constant $c$ such that for all $x,y$ and for every $s$ we have $\KS^{s'}(x)+\KS^{s'}(y\cnd x)\le \KS^s(x,y)+c\log \KS^s(x,y)$, where $s'=s+ c|x|+c|y|+c$. 

This bound assumes that we use the computational model with two stacks; for ordinary Turing machines an additional $O(\log s)$ term is needed in the expression for $s'$.

Longpr\'e~\cite[Theorem 3.13, p.~35]{longpre-thesis} proved essentially\footnote{His setting is slightly different: for him $s$ is not a numerical parameter, but a function of the input size, so the exact comparison is difficult.} the same result with $2s+O(\log s)$ instead of $s$; in his paper he uses $3s$, but his argument gives $2s+O(\log s)$ without changes. We improve this bound using  Sipser's technique from~\cite{sipser} with some additional refinements.

\begin{proof}
  The proof is obtained by a modification of Longpr\'e's argument which in its turn is a modification of the standard proof of the Kolmogorov--Levin formula. So we first recall the standard argument, then explain the modifications used by Longpr\'e, and then prove the final result.

\paragraph*{Recalling the standard argument without resource bounds}
The standard argument (for the complexities without resource bounds) goes as follows. Let $x$ and $y$ be two strings of length at most $n$ and let $m$ be the complexity of the pair: $\KS(x,y)=m$. We may always assume that $m=O(n)$: for the unbounded case it is always true, since we have two strings of length at most $n$ and their pair has complexity at most $2n+O(\log n)$. For the bounded case and very small $s$ the complexity $\KS^s(x,y)$ may be larger than $2n+O(\log n)$ (since even the trivial program for the pair still requires some space to run), but then the inequality is obviously true for $s'=O(n)$, since both terms on the left-hand side are bounded by $n+O(1)$ for this value of $s'$.

Consider the set $S_m$ of all pairs $\langle x',y'\rangle$ such that $\KS(x',y')\le m$. This set can be enumerated by an algorithm (given~$m$), and there are $O(2^m)$ of them. Our pair $\langle x,y\rangle$ is an element of this set. Count the pairs $\langle x,y'\rangle$ in this set \emph{that have the same first coordinate} (i.e., the first coordinate~$x$). Assume that we have about $2^k$ of them for some $k$. We may choose $k$ in such a way that the number of those pairs is between $2^k$ and $2^{k+1}$. Now we make two observations:

\begin{itemize}
\item
Knowing $x$, we can filter the pairs in the enumeration of $S_m$ and keep only the pairs with the first coordinate~$x$, looking at their second coordinate. This process enumerates at most $2^{k+1}$ strings, and $y$ is one of them. The string $y$ can be reconstructed if we know $x$, $m$ and the ordinal number of $y$ in this enumeration (this requires~$k+O(1)$ bits of information). In total we get $O(\log m)+k$ bits (we need to separate $m$ and the ordinal number, and this involves some separation overhead, but this overhead can be absorbed by $O(\log m)$: we may repeat each bit of $m$ twice and add $01$ at the end).  Therefore, $\KS(y\cnd x)\le k+O(\log m)$. 

\item
On the other hand, we can enumerate all $x'$ such that there are at least $2^k$ different $y'$ such that $\KS(x',y')\le m$; there are at most $O(2^{m-k})$ of them, since each of them produces at least $2^k$ pairs and the total number of pairs is $O(2^m)$. The string $x$ appears in this enumeration. So we can specify $x$ by the ordinal number in the enumeration ($m-k+O(1)$ bits), in addition to the values of $m$ and $k$ needed for the enumeration. The total number of bits is $O(\log m)+m-k$, therefore $\KS(x)\le m-k+O(\log m)$. (Note that $k\le m$, so $k$ also has a self-delimited encoding of size $O(\log m)$.) 
\end{itemize}
\noindent
Combining the bounds for $\KS(x)$ and $\KS(y\cnd x)$ and recalling that $m=O(n)$, we get the desired result.
  \smallskip

 \paragraph*{How to obtain a weak space bound (following Longpr\'e)} 

The argument for the unbounded case (as presented above) does not work as is for the space-bounded complexity. The problem is that the enumeration used in this argument needs a lot of space, since the lists of enumerated objects are exponential in $m$. However, another approach is possible. Recall that $x$ and $y$ are strings of length at most $n$. There are at most $O(2^{2n})$ pairs $\langle x,y\rangle$ of strings of length at most $n$. We may consider them in some fixed order (e.g., in the lexicographical one), and compute $\KS^s(x,y)$ for each pair. As we have discussed, the function $\KS^s$ is computable, and the following lemma shows that we do not need too much space to compute it.

\begin{lemma}\label{lem:weakcomp}
The complexity $\KS^s(x)$ can be computed \textup(given $s$ and $x$ such that $s\ge\Omega(|x|)$\textup) in space $2s+O(\log s)+O(|x|)$.
\end{lemma}

This is a weak version of this lemma (that gives Longpr\'e's result). We will see later that one can replace $2s+O(\log s)$ by $s$, and this will allow us to finish the proof of Theorem~\ref{th:sym}, but we start with a simpler bound.

\begin{proof}[Proof of Lemma~\ref{lem:weakcomp}]
  We know that $\KS^{O(|x|)}(x)\le |x|+O(1)$, and our assumption $s= \Omega(|x|)$  guarantees that $\KS^s(x)\le |x|+O(1)$. So it is enough to try all the programs of length at most $|x|+O(1)$ to see which of them produce $x$ with space bound $s$ (in the order of increasing length, so the first one found will be the shortest one). To keep track of the current program, we need $O(|x|)$ space. To simulate the program and to keep track of the space used by it, we need additional $s+O(\log |s|)$ space. The only problem is that the program that we try may never terminate. To detect these cases, we may use a counter for the number of steps. Since a machine with space bound $s$ has at most $2^{s+O(\log s)}$ configurations, if the number of steps exceeds this $2^{s+O(\log s)}$ bound, some configuration appears twice and the program is in the infinite loop. To detect this loop, we use a counter of size $s+O(\log s)$. In total we need $2s+O(\log s)+O(|x|)$ space to find the complexity, as claimed.
\end{proof}

Now the proof goes as before. We consider the set $S^s_{m,n}$ of all pairs $\langle x', y'\rangle$ such that $|x'|\le n$, $|y'|\le n$ and $\KS^s(x',y')\le m$. The pair $\langle x,y\rangle$ is one of its elements. Choose $k$ in such a way that the number of pairs $\langle x,y'\rangle$ in this set (with the first coordinate $x$) is between $2^k$ and $2^{k+1}$. Then
\begin{itemize}
\item $\KS^{2s+O(\log s)+O(n)}(y\cnd x)\le k+O(\log s)+O(\log n)$;
\item $\KS^{2s+O(\log s)+O(n)}(x)\le m-k+O(\log s)+O(\log n)$.
\end{itemize}
Indeed, $y$ can be reconstructed if we know the ordinal number of $y$ in the enumeration of all $y'$ such that $\langle x,y'\rangle\in S^s_{m,n}$, and this set can be enumerated (in the lexicographical order) when $x$, $m$, $n$ and $s$ are known. Lemma~\ref{lem:weakcomp} guarantees that this can be done in space $2s+O(\log s)+O(n)$; recall also that $m=O(n)$ according to our assumption. On the other hand, $x$ can be enumerated together with the other $O(2^{m-k})$ strings $x'$ of length at most $n$ such that there are at least $2^k$ strings $y'$ of length at most $n$ with $\KS^{s}(x',y')\le m$. We can check whether $x'$ has the required property trying all $y'$ sequentially and counting them in $O(n)$ space. The ordinal number of $x$ in the enumeration requires $m-k$ bits, all other parameters require $O(\log n)+O(\log s)$ bits, and the space used in the process is still $2s+O(\log s)+O(n)$.

Combining these two inequalities, we get
\[
\KS^{s'}(y\cnd x)+\KS^{s'}(x)\le \KS^{s}(x,y)+O(\log s)+O(\log n),
\]
where $s'=2s+O(\log s)+O(n)$.

This result is weaker than the claim we need to prove in three aspects. First, we need to replace $2s+O(\log s)$ by $s$ in the expression for $s'$. Second, we proved the inequality with $O(\log s)$ in the right hand side that should not be there. Note that this term makes the statement vacuous if $s$ is exponential in $n$, and does not allow us to get the unbounded Kolmogorov--Levin theorem as a corollary of the bounded version when $s\to\infty$. Finally, we would like to replace $n$ (the length of the strings) in the last term $O(\log n)$ by the complexity of the pair, so the last term would be $O(\log\KS^s(x,y))$.
\smallskip

\paragraph*{How to eliminate factor~$2$ (following Sipser)}

First let us explain how the factor $2$ can be avoided using the following result that goes back to~\cite{sipser}:

\begin{proposition}[Sipser]\label{prop:sipser}
Let $M$ be a machine. Then there is a machine $\overline M$ that decides, given a string $x$ and number~$s$, whether $M$ terminates on input $x$ in space $s$ or not. Machine $\overline M$ uses at most $s+O(|x|)$ space working on pair~$x$,~$s$.
\end{proposition}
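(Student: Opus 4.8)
The plan is to avoid the step counter entirely: the naive loop-detection method runs $M$ forward and declares non-termination once the number of steps exceeds the number $2^{s+O(\log s)}$ of space-$\le s$ configurations, but storing such a counter already costs $s+O(\log s)$ bits and is exactly the source of the wasteful factor $2$. Following Sipser, I would instead reformulate the question as a reachability problem in the graph $G_s$ whose vertices are all configurations of $M$ on input $x$ that use work space at most $s$, with an edge $c \to c'$ whenever $c'$ is the (unique) successor of $c$. Since $M$ is deterministic, $G_s$ has out-degree $1$, so the set of configurations from which the computation eventually reaches the halting configuration forms a tree directed toward its root. The machine $M$ terminates on $x$ in space $s$ if and only if the initial configuration $c_{\mathrm{init}}$ belongs to this tree.

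First I would normalize $M$ so that it has a single halting configuration $c_{\mathrm{halt}}$: before entering the halt state it erases the work tape and returns the heads to canonical positions. To make this possible despite the one-directional read-only input tape, $\overline M$ begins by copying $x$ onto the work tape (this is where the $O(|x|)$ term comes from); thereafter the input is available for two-way random access, the input head can be reset, and $c_{\mathrm{halt}}$ is genuinely unique. The algorithm is then a depth-first search of the backward tree rooted at $c_{\mathrm{halt}}$, checking at each visited vertex whether it equals $c_{\mathrm{init}}$; if this happens, $\overline M$ answers ``terminates'', and if the search exhausts the tree without meeting $c_{\mathrm{init}}$, it answers ``does not terminate in space $s$''.

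The whole point is to run this DFS while storing only the \emph{current} configuration. Two observations make this possible. First, the parent of a vertex $c$ is simply its successor $\mathrm{succ}(c)$, computable by a single forward step of $M$; hence I never store the path from the root and, in particular, never maintain a step counter. Second, a single step alters only $O(1)$ cells near the head together with the state and head positions, so a configuration has only $O(1)$ predecessors, each obtained from it by an $O(1)$ local modification. Thus the children of $c$ can be enumerated in a fixed canonical order by trying these $O(1)$ local variants (discarding any that exceed space $s$ or whose forward step does not return $c$), without a second full-length configuration and without scanning all of $G_s$. On backtracking I recompute, from $c$, its index among the $O(1)$ predecessors of $\mathrm{succ}(c)$, store that index in $O(1)$ bits, move to $\mathrm{succ}(c)$ by one forward step, and resume with the next sibling. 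The total space is therefore one configuration, of size $s+O(\log s)+O(\log|x|)$, plus $O(\log s)$ bits to hold $s$ (needed to reject configurations exceeding the bound), the $O(|x|)$ copy of $x$, and $O(1)$ scratch for the local step computations, i.e.\ $s+O(\log s)+O(|x|)$ in total.

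Correctness is immediate once the tree structure is in place: the backward tree is finite, so the DFS terminates, and by construction $c_{\mathrm{init}}$ is visited exactly when the forward computation from $c_{\mathrm{init}}$ reaches $c_{\mathrm{halt}}$ within space $s$. I expect the main obstacle to be precisely the insistence on storing a single configuration: any implementation that keeps both a child and its parent simultaneously would reintroduce the factor $2$, so the argument hinges on the locality of one-step transitions (to generate predecessors in place) and on recovering parents by forward simulation (to eliminate the DFS stack). A secondary technical point requiring care is the bookkeeping for the one-directional input head --- resolved above by copying $x$ --- which is what keeps the unique halting configuration well defined.
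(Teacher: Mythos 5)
Your proof is correct and follows essentially the same route as the paper: both reduce the question to membership of the initial configuration in the backward tree rooted at a (normalized) unique halting configuration, and both traverse that tree by a stackless depth-first search that stores only the current configuration, plus $O(\log s)$ bits to enforce the space bound and $O(|x|)$ space for input access and comparison with the initial configuration. Your explicit locality argument --- enumerating the $O(1)$ candidate predecessors by local modification and recovering the parent by one forward step --- is precisely the detail the paper delegates to its textbook reference on constant-memory tree traversal.
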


In this statement we assume that the two-stack computation model is used; as a corollary, we get the same result with the additional term $O(\log s)$ for other standard models, e.g., multitape Turing machines.

\begin{proof}

We start by proving a weaker statement with a looser space bound $s+O(\log s)+O(|x|)$. For this bound, there is no problem with keeping $x$, the number of input bits already read by~$M$, and the binary representation of~$s$ in the memory. 

We may assume without loss of generality that machine $M$ clears its stacks when terminates, and reads its input completely. For that the old final state is transformed into a cleaning state that pops elements until both stacks are empty, and reads the input until its end. 

Let us consider all configurations of $M$ that use space at most $s$. We include in the configuration the contents of the stacks, the state of the machine, and the position of the input head on $x$ (the input string for which we want to check the termination). We may ignore the output tape: it is write-only, so operations with the output tape do not affect termination. These configurations are considered as vertices of a directed graph. Namely, for every vertex (configuration) $v$ of that kind, draw an edge that goes from $v$ to the next configuration (after one computation step is performed), or no outgoing edges if $v$ is final or if the next computation step violates the space bound. According to our assumption, the final configuration is unique. Let us denote it by $f$. We need to check whether a (unique) path starting from the initial configuration gets into~$f$.

The graph may have cycles (the machine may go into a loop). However, the connected component of the final configuration, i.e., the set of vertices $v$ such that there is a path from $v$ to $f$, is a tree where edges go from a vertex to its parent. Indeed, the outgoing path is unique (the machine is deterministic), so the vertices of any loop cannot have a path to $v$. The root of this tree is $f$. The termination question can now be reformulated as follows: is the initial configuration in the tree?

To answer this question, one can traverse the tree using depth-first search. Note that the standard (non-recursive) algorithm for this (see, e.g., the textbook~\cite[Chapter 3]{shen-progbook}) does not use any additional memory, and the basic operations can be performed with $O(1)$ space. More precisely, let us order siblings (sons of the same parent) arbitrarily (but consistently). This induces a natural ordering on the leaves. We can traverse the tree, visiting all the leaves in this order. In this process we make three types of moves: from a vertex to (a)~its first child, (b)~its parent and (c)~its next sibling (in the chosen order). All non-leaf vertices are visited twice: first on the path from the root to leaves, the second time on the way back to the root.  The tree-traversing algorithm at every step keeps the current vertex and one bit that says whether we are on the way to the leaves or back.  The basic operation of the tree-traversing algorithm are the following:
\begin{itemize}
\item \emph{Checking whether the given vertex $v$ has children, and if yes, finding the first child of $v$}. In our case this means that the current configuration can be obtained from some other configuration; if yes, we should find the first among those predecessor configurations (children).

\item \emph{Checking whether the given vertex $v$ is the last sibling in the ordering we have on $v$'s siblings; finding the next sibling of $v$ if it exists}. In our case we should consider all the configurations that have the same successor, and find the next one in the chosen ordering (if our configuration is not the last one).

\item \emph{Checking whether the given vertex $v$ is the root, and finding the parent of $v$ if $v$ is not the root}. In our terms it means that we have to check whether the configuration is final, and find the successor configuration if it is not.
\end{itemize}

We also need to keep track of the configuration size (since we do not consider configurations that require more than $s$ space), but this can be done in $O(\log s)$ memory. All other checks are local (require $O(1)$ additional memory), since only the immediate neighborhood of the head ($O(1)$ top elements of the stacks) needs to be taken into account.

We need also to keep track of the position of the input head in $x$ (and keep $x$ in the memory), but this is easy to do with $O(|x|)$ overhead. This finishes the argument for $s+O(\log s)+O(|x|)$ bound.
\smallskip

To get rid of $O(\log s)$ in this bound (as promised), we need additional (and rather strange) tricks. The machine $M$ has two stacks, as well as the machine $\overline{M}$ that we need to construct. However, it is convenient to use Lemma~\ref{lem:sim} and add an auxiliary tape to $\overline{M}$; the space used on this tape is taken into account with some constant factor.

  We keep $x$ (and the input head position in $x$) on the auxiliary tape; this requires $O(|x|)$ space and is not a problem. We use the stacks of $\overline{M}$ to keep (literally) the contents of $M$'s stacks in the current position (i.e., the current vertex considered by the tree traversal algorithm). The basic operations listed above are local and do not require memory (except for $x$ and the input position, already taken into account). However, we need to check whether the modified position of $M$ still uses space at most $s$, i.e., that the total size of two stacks still does not exceeds $s$ after possible increase in the stack sizes. Before, having $O(\log s)$ additional space, we could keep the value of $s$ and the current lengths of stacks, and make these checks. What can we do now? The following idea helps: let us remember $s$ all the time, but in an indirect way: we keep on the auxiliary tape the \emph{difference between $s$ and the total length of two stacks} (of $M$ or $M'$, they are the same). This difference is enough to check whether the possible neighbor in the tree is valid (has total stack length at most $s$). When the total length approaches $s$, the difference counter is small and requires only $O(1)$ bits. When stacks are short, the counter is big and may require $O(\log s)$ bits --- but since we measure the total length of the stacks and the tape, these $O(\log s)$ additional bits are not a problem (it is easy to see that $k+O(\log (s-k)) \le s+O(1)$ for all $k< s$). This finishes the proof of the Proposition~\ref{prop:sipser} (in its strong form, without $O(\log s)$ term).
\end{proof}

Sipser's trick allows us to prove the following stronger version of Lemma~\ref{lem:weakcomp}:

\begin{lemma}\label{lem:strongcomp}
The complexity $\KS^s(x)$ can be computed \textup(given $s$ and $x$ such that $s\ge\Omega(|x|)$\textup) in space $s+O(|x|)$.
\end{lemma}

\begin{proof}
  In the proof of Lemma~\ref{lem:weakcomp} we need to keep $s$ and test all the possible programs to check whether they produce $x$ within space bound $s$. For that, we first check that a program terminates in space $s$ using Proposition~\ref{prop:sipser}, and if yes, apply the interpreter to the program (now being sure that we do not violate the space bound) and compare the output with $x$. Again, we can keep $s$ indirectly during both phases, by keeping the difference between $s$ and the total length of the stacks. Then, if this length comes close to $s$, the counter is small, and when the stacks are short, we may use a lot of space for the counter.
\end{proof}

This immediately gives us a better bound: the inequality
\[
\KS^{s'}(y\cnd x)+\KS^{s'}(x)\le \KS^{s}(x,y)+O(\log s)+O(\log n)
\]
is now proven for $s'=s+O(n)$, (now we have $s$ instead of $2s+O(\log s)$). The $O(\log s)$ additive term \emph{in the right hand side} is still there. It was used to remember the space bound $s$, so the configurations of size greater than $s$ could be discarded.  Still we can avoid this $O(\log s)$ term if we change the enumeration order. 

\paragraph*{Eliminating $O(\log s)$ term in the right hand side}

Instead of enumerating for some fixed $s$ all pairs $\langle x',y'\rangle$ with $|x'|,|y'|\le n$ such that $\KS^s(x',y')\le m$, we enumerate the pairs such that $\KS^u(x',y')\le m$ (and $|x'|,|y'|\le n$) sequentially for $u=1,2,3,\ldots$ So every pair with $|x'|,|y'|\le n$ and (unbounded) complexity $\KS(x',y')\le m$ will be enumerated at some stage, but the space bound (and also the amount of space used for the enumeration) increases with time. Using some additional precautions, we may guarantee that this enumeration will be without repetitions (no pair is enumerated twice). Indeed, after we find some $x',y'$ with $\KS^u(x',y')\le m$ for the current $u$, we check whether the same is true for bound $u-1$, and if yes, skip the pair.  Note that it can be done without increasing the space usage, since we may reuse the same space for both bounds $u$ and $u-1$.

In other words, we enumerate all the pairs $\langle x',y'\rangle$ with $|x'|,|y'|\le n$ and $\KS(x',y')\le m$ in the following order: we compare the minimal space $u$ needed to establish the inequality $\KS^{u}(x',y')\le m$, and for the same $u$ we use some standard ordering on pairs. In this way every pair is enumerated only once without the need to keep the list of the pairs already enumerated.

Now we choose the value of $k$, like we did in the in the proof of the Kolmogorov--Levin formula for unbounded complexity. For that we consider the pairs such that $\KS^s(x,y')\le m$ for given~$x$ and arbitrary~$y'$ (such that $|y'|\le n$) \emph{for the given value of $s$}. There exists some $k$ such that the number of these pairs is between $2^k$ and $2^{k+1}$.  

Let us check that $\KS^{s'}(y\cnd x)\le k+O(\log n)$ for $s'=s+O(n)$. Knowing $n$ and $m$, we can perform the enumeration described above; knowing $x$, we can restrict the enumeration to pairs $\langle x,y'\rangle$ with the first component $x$. The pair $\langle x,y\rangle$ is among them; moreover, we know that its ordinal number in this restricted enumeration is at most $2^{k+1}$, so we need $k+O(1)$ bits to specify this number (in addition to $n$, $m$ and $x$). Performing the enumeration until that many pairs with first component~$x$ appear, we use only $s+O(n)$ bits, since we stop the enumeration after the required number of pairs are found.  This gives the inequality we wanted (recall that $m=O(n)$, so we can specify $m$ and $n$ by $O(\log n)$ bits).

Now we need to show that $\KS^{s'}(x)\le m-k+O(\log n)$ for the same value of $s'$. For that we enumerate elements $x'$ that have large ``vertical sections'' (have many $y'$ such that $\KS^{s}(x',y')\le m$). Again we do it sequentially for $u=1,2,3,\ldots$ For each $u$ we run a loop over all $x'$ with $|x'|\le n$. For each of them we count all $y'$ such that $|y'|\le n$ and $\KS^u(x',y')\le m$. This is done sequentially (and we reuse the space at every step).  If there are more than $2^k$ different strings $y'$ found, we include $x'$ in the enumeration of the elements that have large vertical sections. To avoid repetitions, we use the same trick: we check whether the size of the vertical section was not large enough for the previous value of $u$, repeating all the computations with this value. In this way we enumerate all $x'$ that have large sections for unbounded complexity, using more and more space in the process. Note that we keep the current value of $u$ all the time, but indirectly, as a combination of current stacks' length and the counter (and the counter is short when the space is tight).

This enumeration will include our $x$ at some stage $u\le s$. The ordinal number of $x$ in the enumeration is at most $2^{m-k+O(1)}$ for the same reason as before (for complexities without space bounds).  At this stage the space used by the computation is $s+O(n)$, and we stop the enumeration after a required number of strings are enumerated. To start the enumeration we need to know $n$, $m$ and $k$, all three can be specified by $O(\log n)$ bits, in total we get $m-k+O(\log n)$ bits. We see that $\KS^{s'}(x)\le m-k+O(\log n)$ and may combine this inequality with the bound for $\KS^{s'}(y\cnd x)$ obtained earlier, thus eliminating the term $O(\log s)$ in the right hand side as promised.

\paragraph*{Replacing $O(\log n)$ by $O(\log\KS^s(x,y))$}
We have proven the inequality 
\begin{equation}\tag{$*$}\label{eq:logn_precision}
\KS^{s'}(x)+\KS^{s'}(y\cnd x)\le \KS^s(x,y)+O(\log (|x|+|y|))
\end{equation}
for $s'=s+O(|x|+|y|)$, (in our notation $n$ was the maximal of the lengths of $|x|$ and $|y|$). The last step is to replace $|x|+|y|$ in the right hand side by $\KS^s(x,y)$. This means that in our argument we do not have $n$ as a parameter of the enumeration process and may use only $m$.

The idea is to replace the strings by their shortest programs. For Kolmogorov complexity with unbounded resources, a string $x$ is ``interchangeable'' with one of its shortest programs $p$ in the following sense: 
\[
\KS(x\cnd p) = O(\log m) \quad \text{and}\quad \KS(p\cnd x) = O(\log m),
\]
where $m = \KS(x)$.
The first part is obvious for \emph{every} program $p$ (even with $O(1)$ instead of $O(\log m)$): we apply the optimal interpreter to~$p$ and obtain $x$. The second part is also pretty simple: given  $x$ and  $m$, we run the optimal interpreter on all programs of length $m$ in parallel and take the first one that produces~$x$.

A similar property is true for the space bounded Kolmogorov complexity. If $\KS^s(x)=m$, then there is a program $p$ of length $m$ such that
\[
\KS^{s+O(1)}(x\cnd p) = O(\log m)\quad \text{and}\quad  \KS^{s+O(|x|)}(p\cnd x) = O(\log m).
\]
This $p$ is one of the programs of length $m$ that produce $x$ in space $s$. For such a program~$p$ the first part is trivial: we simulate the universal interpreter on~$p$ and obtain $x$, with $O(1)$ space overhead for the simulation. For the second part, we show how to find a program $p$ of length $m$ for $x$ (that works in space $s$) given~$x$ and~$m$. Following the argument we already used, we try all the programs of length $m$ giving them more and more space ($s'=1,2,3,\ldots$) until one of them produces $x$. For keeping the space overhead in this process small, we keep the value of the current space bound $s'$ indirectly, as a difference between $s'$ and current length of the stacks. We need also to keep $x$, thus the $O(|x|)$ overhead in the space bound.

Using this property, we replace $x$ and $y$ by some $p_x$ and $p_y$ whose lengths are $O(\KS^s(x,y))$. A small technicality is that $\KS^s(x)$ may not be bounded by $\KS^s(x,y)$, since extracting $x$ from the pair requires some overhead. In fact, $O(1)$ overhead is enough: $\KS^{s+O(1)}(x)\le \KS^{s}(x,y)$ (but $O(|x|+|y|)$ overhead would work too). Applying the previous remark to this bound, we find $p_x$ of length at most $\KS^{s}(x,y)$ such that  
\[
\KS^{s+O(1)}(x\cnd p_x) = O(\log\KS^s(x,y))\quad \text{and}\quad  \KS^{s+O(|x|)}(p_x\cnd x) = O(\log\KS^s(x,y)).
\]
The same can be done for $y$ to get a replacement string $p_y$ with similar properties. Then we apply the previous form of the inequality (with $O(\log n)$) to $p_x$ and $p_y$, and note that replacing $x$ by $p_x$ in expressions with conditional or unconditional complexity changes the complexity bound by $O(\log\KS^s(x,y))$ and the overhead by $O(|x|+|y|)$. 

This finishes the proof of Theorem~\ref{th:sym}.
\end{proof}

\section{Basic inequalities: space-bounded version}\label{sec:basic}

We have defined space-bounded complexity for pairs. In the same way (and with the same precision) one can define the complexity of triples, and, in general, $m$-tuples for every fixed $m$. In the section we prove space-bounded versions of the so-called \emph{basic inequalities} for Kolmogorov complexity.

The basic inequality involves complexities of triples and says (in the unbounded version) that 
\[
\KS(x)+\KS(x,y,z)\le \KS(x,y)+\KS(x,z)+ O(\log n)
\]
if $x,y,z$ are strings of length at most $n$. Usually it is proved by considering conditional complexities:
\begin{align*}
\KS(x,y)&=\KS(x)+\KS(y\cnd x)+O(\log n),\\
\KS(x,z)&=\KS(x)+\KS(z\cnd x)+O(\log n),\\
\KS(x,y,z)&=\KS(x)+\KS(y,z\cnd x)+O(\log n).
\end{align*}
Using these equalities, we rewrite the inequality as
\[
\KS(y,z\cnd x)\le \KS(y\cnd x)+\KS(z\cnd x)+O(\log n),
\]
and this is a relativized version of the inequality for the complexity of pairs:
\[
\KS(y,z) = \KS(y)+\KS(z\cnd y) +O(\log n) \le \KS(y)+\KS(z)+O(\log n);
\]
adding $y$ as a condition may only decrease the complexity of $z$. In computability theory relativization is usually understood as adding an oracle access to some set to all the computations; almost all results of general computability theory remain valid after relativization.  In algorithmic information theory a slightly different notion of relativization is also used: instead of adding a set as an oracle, we add some string as a condition in all the complexity expressions. Almost all results (and their proofs) remain valid after that.\footnote{For the space-bounded complexity additional precautions are needed: if we add $x$ as a condition, it may be necessary to add $O(|x|)$ to the space bound.}

 Let us do this in more detail to see how the space-bounded version can be proven. We have
\begin{align*}
\KS^{s'}(x)+\KS^{s'}(y\cnd x)& \le \KS^s(x,y)+O(\log n),\\
\KS^{s'}(x)+\KS^{s'}(z\cnd x)& \le \KS^s(x,z)+O(\log n),
\end{align*}
for some $s'$ slightly larger than $s$ (by that we mean that $s'=s+O(n)$). Therefore,
\[
2\KS^{s'}(x) + \KS^{s'}(y\cnd x) + \KS^{s'}(z\cnd x) \le \KS^s(x,y)+\KS^s(x,z)+O(\log n).
\]
From this we conclude that
\[
2\KS^{s'}(x) + \KS^{s''}(y,z\cnd x) \le 
2\KS^{s'}(x) + \KS^{s'}(y\cnd x) + \KS^{s'}(z\cnd x)+O(\log n) \le
\KS^s(x,y)+\KS^s(x,z)+O(\log n),
\]
for some $s''$ slightly exceeding $s'$, using the relativized inequality for the complexity of a pair:
\[
\KS^{s''}(y,z\cnd x) \le \KS^{s'}(y\cnd x) + \KS^{s'}(z\cnd x)+O(\log n).
\]
Here $s''=s'+O(n)=s+O(n)$ absorbs the increase $O(n)$ caused by the length of the condition $x$ that is needed for the relativization. Now we recall that
\[
\KS^{s'''}(x,y,z) \le \KS^{s'}(x)+\KS^{s''}(y,z\cnd x)+O(\log n)
\] 
(the easy direction of the Kolmogorov--Levin formula) for some $s'''$ slightly greater than $s'$ and $s''$, and get
\[
\KS^{s'}(x)+\KS^{s'''}(x,y,z) \le \KS^s(x,y)+\KS^s(x,z)+O(\log n).
\]
For uniformity we can replace $s'$ by $s'''$ on the left-hand side.
Here $s'''$ is the third iteration of adding overhead, so still 
\(
s''' = s+O(n),
\)
and we get the following space-bounded version of basic inequality:
\begin{theorem}[Space-bounded basic inequality]\label{th:basic}
\[
\KS^{s'} (x) + \KS^{s'} (x,y,z)\le
\KS^s(x,y)+\KS^s(x,z)+O(\log n) 
\]
for all~$n$, for all strings $x,y,z$ of length at most~$n$, for all numbers~$s$, and for 
\(
  s'=s+O(n).
\)
\end{theorem}
More general inequalities (called also basic inequalities) are obtained if we replace $x,y,z$ by tuples of strings; they are easy corollaries of Theorem~\ref{th:basic} (converting the tuples into their string encoding and vice versa can be done in $O(n)$ space for strings of size at most $n$).

\section{Shannon inequalities: iterations}\label{sec:shannon}

Fix some integer $k\ge1$; let $x_1,\ldots,x_k$ be some strings. For each $I\subset\{1,\ldots,k\}$ we consider the tuple $x_I$ made of strings $x_i$ with $i\in I$. In this notation, the basic inequalities mentioned above can be written as
\[
\KS^{s'}(x_{I\cap J})+\KS^{s'}(x_{I\cup J})\le \KS^s(x_I)+\KS^s(x_J)+O(\log n),
\]
if all $x_1,\ldots,x_k$ are strings of length at most $n$ and $s'=s+O(n)$. (The constants in the $O$-notation may depend on $k, I, J$, but not on $n$, $x_1,\ldots,x_k$, $s$.)

Taking the sum of several basic inequalities (for the same $k$, but for different $I$ and $J$), we may get other linear inequalities for the complexities of tuples, i.e., inequalities of the type
\[
\sum_{I\subset \{1,\ldots,k\}} \lambda_I\KS(x_I)\ge 0,
\]
where $\lambda_I$ are some real coefficients. This is a well known procedure for unbounded Kolmogorov complexity~\cite[Chapter 10]{usv}; the resulting linear inequalities are called \emph{Shannon inequalities}. Not all linear inequalities that are true with logarithmic precision are Shannon inequalities (an important discovery made in~\cite{yeung-zhang}). 

In this section we show that \emph{every Shannon inequality has a space-bounded version}. This space-bounded version is constructed as follows. We start by separating the positive and negative coefficients in the linear inequality. The corresponding groups are denoted by $L$ and $R$; their elements are subsets of the set $\{1,\ldots,k\}$, and we assume that $L\cap R = \varnothing$. Now the general form of a linear inequality for complexities of tuples is
\begin{equation}
\sum_{I\in L} \lambda_I\KS(x_I)\le \sum_{J\in R} \mu_J\KS(x_J)\label{eq:separ}
\end{equation}
where all $\lambda_I$ and $\mu_J$ are non-negative. The following theorem says that each Shannon inequality has a space-bounded counterpart of the same form as for the basic inequalities (but with slightly weaker space bound).

\begin{theorem}\label{th:shannon}
Consider a linear inequality of the form~\textup{(\ref{eq:separ})} that is a linear combination of basic inequalities \textup(is a Shannon inequality\textup). Then the following space-bounded version of this inequality is true:
\begin{equation}\label{eq:shannon}
\sum_{I\in L} \lambda_I\KS^{s'}(x_I)\le \sum_{J\in R} \mu_J\KS^s(x_J)+O(\log n),
\end{equation}
if $x_1,\ldots,x_k$ are strings of length at most $n$, and $s'=s+O(n^2)$. 
\end{theorem}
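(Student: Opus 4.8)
The plan is to exploit the hypothesis that~(\ref{eq:separ}) is a non-negative combination of basic inequalities, and to add up the corresponding space-bounded basic inequalities of Theorem~\ref{th:basic}. First I would clear denominators: the combination coefficients may be taken rational, so after multiplying~(\ref{eq:separ}) by a common denominator we may assume that all $\lambda_I,\mu_J$ and all combination coefficients are integers. Proving the scaled inequality with an $O(\log n)$ error implies the original one, since dividing back only shrinks the error. Thus the task reduces to summing a fixed finite multiset of space-bounded basic inequalities so that the intermediate tuples cancel and only the tuples $x_I$ ($I\in L$) and $x_J$ ($J\in R$) survive.

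The difficulty — and the reason a naive sum fails — is that the space bounds are directional. A tuple $x_I$ that is to be cancelled appears on the left of one basic inequality (at the larger bound $s'$) and on the right of another (at the smaller bound $s$), and we only have $\KS^{s'}(x_I)\le\KS^{s}(x_I)$, not equality. Summing therefore leaves a residual error $\sum\bigl(\KS^{s}(x_I)-\KS^{s'}(x_I)\bigr)$ over the cancelled tuples, and there is no a priori reason for this gap to be as small as $O(\log n)$. Controlling these gaps is the main obstacle, and it is exactly what I expect the ``iterations'' lemma to handle.

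My resolution would be a stability (pigeonhole) argument. For each fixed tuple the map $\sigma\mapsto\KS^{\sigma}(x_I)$ is non-increasing and takes integer values in $[0,O(n)]$, so it has at most $O(n)$ points of decrease; over the $2^{k}=O(1)$ relevant tuples there are at most $O(n)$ such ``jump'' levels in total. Hence within a long enough range of space bounds above $s$ one can locate a \emph{stable window} $[\underline s,\overline s]$ on which every relevant $\KS^{\sigma}(x_I)$ is constant. Inside such a window each space-bounded basic inequality of Theorem~\ref{th:basic}, whose overhead interval $[\sigma,\sigma+O(\log\sigma)+O(n)]$ fits in the window, becomes an inequality between the \emph{same} constant values $K(x_I):=\KS^{\underline s}(x_I)$, so the gaps vanish and cancellation is exact. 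Summing the basic inequalities is then pure arithmetic on the numbers $K(x_I)$, and the given combination yields $\sum_{I\in L}\lambda_I K(x_I)\le\sum_{J\in R}\mu_J K(x_J)+O(\log n)$. Finally $K(x_I)=\KS^{\overline s}(x_I)\ge\KS^{s'}(x_I)$ for $s'\ge\overline s$ handles the left side, while $K(x_J)=\KS^{\underline s}(x_J)\le\KS^{s}(x_J)$ (as $\underline s\ge s$) handles the right side, giving exactly~(\ref{eq:shannon}).

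It remains to bound $\overline s$. A window of width $O(\log s)+O(n)$ suffices to host one basic-inequality overhead, and since at most $O(n)$ jumps can spoil the candidate windows stacked above $s$, a clean one must occur at height $\overline s=s+O(n)\cdot\bigl(O(\log s)+O(n)\bigr)$, so that $s'=\overline s$ has the stated shape $s+O(n\log s)+O(n^{2}\log n)$. The hard part, I expect, is precisely this last step: showing that a simultaneously stable window for all tuples exists, fitting the full derivation inside it, and bookkeeping the search (locating and addressing the clean window among the $O(n)$ jump levels) so that the total overhead telescopes to the claimed polynomial — which is where the extra logarithmic factor, and the assistance credited for Lemma~\ref{lem:iterations}, come in.
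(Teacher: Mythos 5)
Your proposal is correct and is essentially the paper's own argument: your ``stable window'' located by pigeonhole on the $O(n)$ integer jump levels of the monotone functions $\sigma\mapsto\KS^{\sigma}(x_I)$ is exactly the paper's pair of consecutive iterates $u_k,u_{k+1}$ of $f(\sigma)=\sigma+O(\log\sigma)+O(n)$ at which no complexity changes, and both proofs then cancel the common terms inside that window, extend by monotonicity to $s'=u_N$ (with $N=O(n)$) and $s=u_0$, and invoke Lemma~\ref{lem:iterations} to bound the iterated overhead by $s+O(n\log s)+O(n^2\log n)$. The only slight imprecision is your back-of-envelope bound $\overline s=s+O(n)\cdot(O(\log s)+O(n))$, which ignores that the window widths grow as $O(\log u_j)$ rather than $O(\log s)$; but you correctly defer that compounding to the iteration lemma, just as the paper does.
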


Here the constants in the $O$-notation depend on the inequality (more precisely, on~$k$ and the coefficients $\lambda_I$ and~$\mu_j$)\footnote{The constant in the last line depends only on $k$, as the proof shows.}, but neither on $n$ nor on $x_1,\ldots,x_k$. Note that the overhead is worse than for the basic inequalities: we have $O(n^2)$ instead  of $O(n)$.

\begin{proof}
Consider the basic inequalities whose sum is the inequality (\ref{eq:separ}). For each of them consider the space-bounded version (from Theorem~\ref{th:basic}). The sum of these space-bounded inequalities does not give $(\ref{eq:shannon})$ directly: the resulting inequality may have terms $\KS(x_I)$ with the same $I$ in the left and right hand sides. In other words, we get an inequality of type (\ref{eq:separ}), but the sets $L$ and $R$ are not necessarily disjoint. For the unbounded complexities, these terms just cancel each other (partially or completely), and we get the desired Shannon inequality. Now, when adding the space-bounded versions of the same basic inequality, we get an inequality where the complexity of the same tuple may appear with the same coefficient on both sides, but with different space bounds. We can rewrite is as
\begin{equation*}
\sum_{I\in L} \lambda_I\KS^{s'}(x_I)+\sum_{K\in C}\sigma_K \KS^{s'}(x_K)\le \sum_{J\in R} \mu_J\KS^s(x_J)+\sum_{K\in C}\sigma_K \KS^{s}(x_K)+O(\log n).
\end{equation*}
Here $x_K$ are tuples that appear on both sides in the terms that are canceled in the unbounded version (partially or completely). Some $K\in C$ may also appear in $L$ or $R$ (the part that is not canceled), but not in both: the sets $L$ and $R$ are disjoint. We would like to cancel the complexities of $x_K$ for $K\in C$, but now the complexities are different. They have bound $s'$ on the left-hand side and $s$ on the right-hand side, and cannot be canceled anymore.

The following trick helps. Let $f(s)=s+O(n)$ be the function from Theorem~\ref{th:basic} that transforms the right-hand side bound  $s$ to the left-hand side bound $s'$ (here $s$ is a variable, while $n$ and the constants in the $O(\cdot)$-notation are fixed). Consider the sequence of space bounds
\[
 u_0 = s, u_1=f(u_0),\ldots, u_N=f(u_{N-1})
\]
for some large $N$. All tuple complexities can only decrease if we increase the space bound from $u_t$ to $u_{t+1}$. Therefore, for a large enough $N$, namely, $N=O(n)$ with a large enough constant, we guarantee the existence of $t$ such that all complexities of tuples are the same with bounds $u_t$ and $u_{t+1}$. Then we can add the space-bounded inequalities and cancel the common terms as we did for the unbounded version. More precisely, we know that
\begin{equation*}
\sum_{I\in L} \lambda_I\KS^{u_{t+1}}(x_I)+\sum_{K\in C}\sigma_K \KS^{u_{t+1}}(x_K)\le \sum_{J\in R} \mu_J\KS^{u_t}(x_J)+\sum_{K\in C}\sigma_K \KS^{u_t}(x_K)+O(\log n),
\end{equation*}
and on both sides $u_t$ can be replaced by $u_{t+1}$ due to our assumption. So we can cancel the common terms. We cannot compute $t$ for which there is no change in the complexities, but its existence is guaranteed. Then we can replace the bound on the left-hand side by $u_N$, and on the right-hand side by $s=u_0$.

  It remains to note that for $N=O(n)$ we add the $O(n)$ term $N=O(n)$ times, so the final value after $N$ iterations is $s+O(n^2)$. 

In fact, we can improve the bound in Theorem~\ref{th:shannon}. For that we may note that it is not needed to have exactly the same complexities with space bounds $u_{t+1}$ and $u_t$. It is enough that the difference between them is $O(\log n)$, since we have $O(\log n)$ term in the right hand side anyway. Therefore, $N=n/\log n$ iterations are enough, and in this way we replace $O(n^2)$ by $O(n^2/\log n)$, getting a bit stronger version of Theorem~\ref{th:shannon}.
\end{proof}

\begin{remark}
This argument relies on the good space bounds in the left hand side of Theorem~\ref{th:sym}. If we used (instead of Theorem~\ref{th:sym}) the bound with factor~$2$, the $n$-th iteration would give an exponential factor $2^n$, so we wouldn't get a polynomial (in $n$) space bound.\footnote{In the previous version of this paper (still available in \texttt{arxiv},~\cite{this-archive}) we had $f(s)=s+O(\log s)+O(n)$, and then we estimated the iterations of $f$ by a simple but boring argument. With a better bound $f(s)=s+O(n)$ this is no more needed.} 
\end{remark}

\begin{remark}
It may happen that for some Shannon inequality the cancellation problem does not arise. This indeed happens for some natural Shannon inequalities, e.g., for
\[
2\KS^{s'}(A,B,C)\le \KS^{s}(A,B)+\KS^{s}(A,C)+\KS^{s}(B,C)+O(\log n),
\]
 that is therefore true for $s'=s+O(n)$. However, it is not clear whether this can be done for arbitrary Shannon inequalities. 
\end{remark}

\section{General result}\label{sec:main}

In this section we use a similar technique to prove a more general result that covers not only Shannon inequalities but all true linear inequalities for Kolmogorov complexity. Recall that a theorem from Hammer et al. (\cite{romashchenko1997}, see~\cite[Chapter 10]{usv} for the detailed exposition) says that the same linear inequalities are true for complexities (with logarithmic precision) and for Shannon entropies. In this section we want to show that all inequalities in this class have space-bounded counterparts.  For that, we need to modify the original proof from~\cite{romashchenko1997,usv} using the tools we developed. Let us first formulate this result in a form similar to Theorem~\ref{th:shannon}.
\medskip

\noindent
Let us fix some integer~$k\ge 1$. 

\begin{theorem}[Inequality with two space bounds]\label{th:general}
Assume that a linear inequality for unbounded complexities with non-negative coefficients $\lambda_I$ and $\mu_J$,
\begin{equation}
\sum_{I\in L} \lambda_I\KS(x_I)\le \sum_{J\in R} \mu_J\KS(x_J)+O(\log n),
\end{equation}
is true for all $n$ and for all strings $x_1,\ldots,x_k$ of length at most $n$.
Then its space-bounded version
\begin{equation}
\sum_{I\in L} \lambda_I\KS^{s'}(x_I)\le \sum_{J\in R} \mu_J\KS^s(x_J)+O(\log n)\label{ineq:bound}
\end{equation}
holds for all $n, s$, for all strings $x_1,\ldots,x_k$ of length at most $n$ and for $s'=s+O(n^2)$.
\end{theorem}

We will derive this result from a different statement that does not require separating positive and negative coefficients:

\begin{theorem}[Existence of a common space bound]\label{th:general2}
Assume that a linear inequality for unbounded complexities
\begin{equation}
\sum_{I} \lambda_I\KS(x_I)\le O(\log n),
\end{equation}
with coefficients \( \lambda_{I} \) that can be positive or negative,
is true for all $n$ and for all strings $x_1,\ldots,x_k$ of length at most $n$.
Then for every $n,s$ and for every $x_1,\ldots,x_k$ of length at most $n$ its space-bounded version
\begin{equation}
\sum_{I\in L} \lambda_I\KS^{s'}(x_I)\le O(\log n)\label{ineq:exist}
\end{equation}
holds for some $s'\in [s, s+O(n^2)]$.
\end{theorem}

This statement is purely existential (and a bit weird): it says that \emph{there exists some $s'$} between $s$ and $s+O(n^2)$ (depending on $x_1,\ldots,x_k$) for which the inequality is true. Still it is easy to see that Theorem~\ref{th:general} immediately follows from Theorem~\ref{th:general2}: if the inequality~(\ref{ineq:exist}) is true for some value of $s'$, we may separate positive and negative coefficients as in (\ref{ineq:bound}) and then replace $s'$ by $s$ in the right hand side, and by $s+O(n^2)$ in the left hand side, due to the monotonicity. So it remains to prove Theorem~\ref{th:general2}.

\begin{proof}
We adapt the arguments used in~\cite{romashchenko1997,usv} to prove the connection between (unbounded) complexity and entropy inequalities.

\emph{Step 1}. First of all, we convert our assumption into the language of Shannon's information theory and note that
\begin{equation*}
\sum_{I} \lambda_I H(\xi_I)\le 0
\end{equation*}
for arbitrary random variables $\xi_1,\ldots,\xi_k$. Indeed, it is well known (see, e.g., \cite[chapter 7]{usv}) that if $\rho$ is an arbitrary random variable that has finite range, and $\rho^1,\ldots,\rho^N$ are independent identically distributed copies of $\rho$, then the expected Kolmogorov complexity of the finite object $(\rho^1,\ldots,\rho^N)$ is $NH(\rho)+O(\log N)$. Then, for a large $N$, we take $N$ independent copies of the tuple $\xi_1,\ldots,\xi_k$. For every $I\subset\{1,\ldots,k\}$ we have
\[
H(\xi_I)= \frac{\E[\KS(\xi_I^1,\ldots,\xi_I^N)]}{N} + \frac{O(\log N)}{N}.
\]
The matrix $\xi_i^j$ can be considered as a $k$-tuple of its columns (for each column $i$ is fixed and $j$ ranges from $1$ to $N$), and the inequality for complexities can be applied to these columns. It guarantees that
\[
\sum_I \lambda_I H(\xi_I) \le \frac{O(\log N)}{N},
\]
and we get the required inequality since the left hand side does not depend on $N$.

\emph{Step 2}. For a given tuple $x_1,\ldots,x_k$ whose elements are strings of length at most~$n$, and for some $s'\ge s$ consider the set $X$ of all the tuples $y_1,\ldots,y_k$ of strings of length at most~$n$ such that
\[
\KS^{s'}(y_I \cnd y_J) \le \KS^{s'}(x_I\cnd x_J)
\]
	for all sets $I,J\subset\{1,\ldots,k\}$. The log-size of $X$ does not exceed $\KS^{s'}(x_1,\ldots,x_k)$, since one of the inequalities requires that $\KS^{s'}(y_1,\ldots,y_k)\le \KS^{s'}(x_1,\ldots,x_k)$ (for empty $J$ and for $I=\{1,\ldots,k\}$). The following lemma provides a lower bound for its size:

\begin{lemma}\label{lem:typization}
The log-size of $X$ is at least $\KS^{s''}(x_1,\ldots,x_k)-O(\log n)$, where $s''=s'+O(n)$.
\end{lemma}
	
\begin{proof}[Proof of Lemma~\ref{lem:typization}]
The set of all $y_1,\ldots,y_k$ of length at most~$n$ that satisfy all the inequalities
\[
\KS(y_I \cnd y_J) \le \KS^{s'}(x_I\cnd x_J)
\]
(with unbounded complexity in the left side) can be enumerated if $n$ and all the complexities in the right side on the inequalities are given. So the information needed to start the enumeration is of size $O(\log n)$. The tuple $x_1,\ldots,x_k$ belongs to $X$, and can be reconstructed if its ordinal number in the enumeration is given. Therefore,
\[ 
\KS (x_1,\ldots,x_k) \le \log|X|+O(\log n).
\]
Let us strengthen this inequality by using bounded complexity in the left-hand side:
\[
\KS^{s'+O(n)} (x_1,\ldots,x_k)\le \log |X| + O(\log n).
\]
  Indeed, the enumeration can be performed sequentially with increasing space bounds $1,2,3,\ldots$, using Lemma~\ref{lem:strongcomp} to compute space-bounded complexities. As before, we ensure the enumeration without repetitions by checking for every tuple $y_1,\ldots,y_k$ whether it already appeared for the previous value of the space bound. In this enumeration the tuple $x_1,\ldots,x_k$ appears when the space bound is $s'$ (or less). Stopping the enumeration at this time (knowing the number of tuples that should be enumerated), we use space $s''=s'+O(n)$. As in the proof of Theorem~\ref{th:sym}, we keep the current value of the space bound all the time, but in such a way (as the difference between this value and stacks' size) that the used space never exceeds~$s'+O(n)$.  

Lemma~\ref{lem:typization} is proven.
\end{proof}

\emph{Step 3}. As in the proof of Theorem~\ref{th:shannon}, consider the sequence of bounds $s, f(s), f(f(s)),\ldots$ where $f(s)=s+O(n)$ is the bound from Lemma~\ref{lem:typization}. When the bound $s'$ increases, all the complexities $\KS^{s'}(x_I\cnd x_J)$ may only decrease. Recall that the parameter $k$ is fixed; we have only $O(1)$-many decreasing complexities (for all pairs $I,J\subset\{1,2,\ldots,k\}$), and the initial value of these complexities is $O(n)$. Therefore, there are at most $O(n)$ steps when some complexity decreases, and it is enough to make $O(n)$ iterations to come to an iteration step when all complexities do not change. The total increase of the space bound during $O(n)$ iterations is $O(n^2)$. So we come to the following statement:
\begin{lemma}\label{lem:stabilization}
There exists some $s'\in [s, s+O(n^2)]$ such that 
\[
\KS^{f(s')}(x_I\cnd x_J)=\KS^{s'}(x_I\cnd x_J)
\]
for all $I,J\subset\{1,2,\ldots,k\}$.
\end{lemma}

Combining Lemma~\ref{lem:stabilization} with Lemma~\ref{lem:typization}, we conclude that there exists $s'\in [s, s+O(n^2)]$ such that the set $X$ of all $y_1,\ldots,y_k$ of length at most~$n$ such that
\[
\KS^{s'}(y_I \cnd y_J) \le \KS^{s'}(x_I\cnd x_J)
\]
(for all $I$, $J$) has log-size $\KS^{s'}(x_1,\ldots,x_k)+O(\log n)$: the upper bound for $\log|X|$ is obvious, and the lower bound is provided by Lemma~\ref{lem:typization}, where we can replace $s''$ by $s'$ due to the choice of $s'$.

Now consider a tuple of random variables $\xi_1,\ldots,\xi_k$ uniformly distributed in the set $X$. Its entropy is $\log|X|=\KS^{s'}(x_1,\ldots,x_k)+O(\log n)$. The following lemma shows that the same connection between entropies and complexities is true for an arbitrary subset of indices. By $\xi_I$ we denote the tuple of random variables $\xi_i$ for $i\in I$. 

\begin{lemma}\label{lem:subtuples}
\[
H(\xi_I) = \log \KS^{s'}(x_I)+O(\log n).
\]
for every $I\subset\{1,\ldots,k\}$.
\end{lemma}
This lemma finishes the proof of Theorem~\ref{th:general2}. Indeed, if some inequality is true for (unbounded) complexities with logarithmic precision, it is true for entropies. In particular, it is true for entropies of subsets of $\xi_1,\ldots,\xi_k$, and these entropies coincide with bounded-space complexities of corresponding subsets of $x_1,\ldots,x_k$ with logarithmic precision. Therefore the inequality is also true for bounded-space complexities (for some $s'$ in the interval $[s,s+O(n^2)]$, provided by Lemma~\ref{lem:stabilization}). It remains to prove Lemma~\ref{lem:subtuples}.

\begin{proof}[Proof of Lemma~\ref{lem:subtuples}]
Let $I$ be some subset of $\{1,\ldots,k\}$, and $J$ be its complement: $J=\{1,\ldots,k\}\setminus I$. We know that
\[
H(\xi_1,\ldots,\xi_k)= H(\xi_I)+H(\xi_J\cnd \xi_I).
\]
All values of $\xi_I$ are among tuples $y_I$ for $y\in X$, and therefore $\KS^{s'}(y_I)\le \KS^{s'}(x_I)$. So the range of $\xi_I$ has log-size at most $\KS^{s'}(x_I)+O(1)$, and the entropy of a random variable does not exceed the log-size of its range:
\[
H(\xi_I)\le \KS^{s'}(x_I)+O(1).
\]
For similar reasons we have
\[
H(\xi_J\cnd \xi_I)\le \KS^{s'}(x_J\cnd x_I)+O(1).
\]
Indeed, for every $y_1,\ldots,y_k$ in $X$ we have $\KS^{s'}(y_J\cnd y_I)\le \KS^{s'}(x_J\cnd x_I)$, so for every value of $\xi_I$ the set of possible values of $\xi_J$ has log-size at most $\KS^{s'}(x_J\cnd x_I)+O(1)$.
The choice of $s'$ guarantees that the complexities of $x_J$ given $x_I$ with bound $s'$ coincide with the same complexities with bound $s''=s'+O(n)$. So we can write a chain of inequalities with precision $O(\log n)$:
\begin{equation*}
H(\xi_1,\ldots,\xi_k)= H(\xi_I)+H(\xi_J\cnd \xi_I)\le \KS^{s'}(x_I)+\KS^{s'}(x_J\cnd x_I) 
= \KS^{s''}(x_I)+\KS^{s''}(x_J\cnd x_I) \le \KS^{s'}(x_1,\ldots,x_k).
\end{equation*}
(the last inequality is due to Theorem~\ref{th:sym}).
We know that the leftmost and rightmost terms of this inequality coincide (with $O(\log n)$ precision, as for the other parts), so all the inequalities that appear in this chain are equalities with $O(\log n)$ precision. In particular, $H(\xi_I)=\KS^{s'}(x_I)+O(\log n)$. Lemma~\ref{lem:subtuples} is proven.
\end{proof}

This finishes the proof of Theorem~\ref{th:general2} (and its corollary, Theorem~\ref{th:general}).	
\end{proof}

\begin{remark}
Again, we do not need the complexities in Lemma~\ref{lem:stabilization} with bounds $s$ and $s'$ to be \emph{exactly} the same; all the arguments remain valid if we make them differ by $O(\log n)$. In this way we may use $O(n/\log n)$ steps instead of $O(n)$, and get a slightly better bound $O(s)+O(n^2/\log n)$ in Theorems~\ref{th:general} and \ref{th:general2}.
\end{remark}

\begin{remark}
We may consider a more general class of linear inequalities in Theorem~\ref{th:general2} that include all conditional complexities:
\[
\sum \lambda_{I,J} \KS(x_I\cnd x_J)\le 0.
\]
Theorem~\ref{th:general2} remains valid, and the proof is essentially the same; we need to show in Lemma~\ref{lem:subtuples} that
\[
H(\xi_I\cnd \xi_J) = \log \KS^{s'}(x_I\cnd x_J)+O(\log n)
\]
for all $I,J\subset\{1,\ldots,k\}$. This is done by a similar argument. First let us assume that $I$ and $J$ are disjoint. Let $R$ be the set of indices that are not in $I$ and not in $J$. Then we write the following chain of inequalities with $O(\log n)$ precision:
\begin{multline*}
H(\xi_1,\ldots,\xi_k)=H(\xi_J)+H(\xi_I\cnd \xi_J)+H(\xi_R\cnd \xi_{I\cup J})\le
\KS^{s'}(x_J)+\KS^{s'}(x_I\cnd x_J)+\KS^{s'}(x_R\cnd x_{I\cup J})=\\=
\KS^{s''}(x_J)+\KS^{s''}(x_I\cnd x_J)+\KS^{s''}(x_R\cnd x_{I\cup J}) \le
\KS^{s'}(x_1,\ldots,x_k),
\end{multline*}
and use the same argument as before. The difference is that here we need to use the bounded-space Kolmogorov--Levin formula for triples:
\[
\KS^{s''}(x)+\KS^{s''}(y\cnd x)+\KS^{s''}(z\cnd x,y)\le \KS^{s'}(x,y,z)
\]
which can be obtained by using the formula for pairs twice; recall that $O(n)$ overhead, appearing twice, is still $O(n)$.

As before, Theorem~\ref{th:general2} implies Theorem~\ref{th:general}. 

For unbounded complexities it makes no sense to include conditional complexities in the inequalities, since Kolmogorov--Levin formula reduces them to unconditional ones. However, for space-bounded complexities this  reduction will change the bounds, so we may wish to allow them to appear explicitly.
\end{remark}

\begin{remark}

\leavevmode

In Theorems~\ref{th:general} and~\ref{th:general2} we may also replace the $O(\log n)$ additive term by $O(\log \KS^s(x_1, \ldots, x_k))$. 

For Theorem~\ref{th:general} we repeat the argument used to finish the proof of theorem~\ref{th:sym}. We noted there that for all~$s$ and~$x$, there exists a program $p$ of length $m=\KS^s(x)$ such that $\KS^{s+O(|x|)}(p \cnd x) \le O(\log m)$ and $\KS^{s+O(1)}(x\cnd p) \le O(1)$. Hence, the better precision follows by replacing $x_1, \ldots, x_k$ by the programs $p_1, \ldots, p_k$ of length $|p_i| \le \KS^{s+c}(x_i)$ where the constant~$c$ should be large enough to guarantee that $\KS^{s+c}(x_i) \le \KS^{s}(x_1, \ldots, x_k)+ O(1)$. 

For Theorem~\ref{th:general2} we use the same idea, but first we have to look at the proof of this theorem and notice that in fact we proved the following statement: 
\begin{quote}
\emph{if for some strings $x_1,\ldots,x_k$ of length at most $n$ the complexities $\KS^s(x_I\cnd x_J)$ change at most by $d$ when $s$ is increased up to $s+cn$} (here $c$ is a large enough constant), \emph{then the inequality $(\ref{ineq:exist})$ is valid for $s'=s$ with additional term $O(d)$ in the right hand size.} 
\end{quote}

Now the argument goes as follows. We have strings $x_1,\ldots,x_k$ of length at most $n$ and look at the complexities $\KS^{s'}(x_I\cnd x_J)$ as a function of $s'$. As before, we can find a interval of length $c'n$ inside $[s,s+O(n^2)]$ where all these complexities do not change. This can be done for arbitrary large constant $c'$ (and the constant in $O(n^2)$ depends on $c'$). Let $[u,v]$ be this interval. Then we have $\KS^u(x_i)\le \KS^u(x_1,\ldots,x_k)+O(1)$, since $\KS^{s'}(x_i)$ is the same for all $s'\in[u,v]$. 

  Now we apply our replacement argument and find $p_i$ such that conditional complexities $\KS^{u+O(n)}(x_i\cnd p_i)$ and $\KS^{u+O(n)}(p_i\cnd x_i)$ are at most $O(\log m)$, and the lengths of all $p_i$ are $O(m)$, where $m=O(\KS^u(x_1,\ldots,x_k))$. Therefore, if we increase the left endpoint $u$ of the interval for the space bounds by $O(n)$,  in this smaller interval $[u+O(n),v]$ all the values $\KS^{s'}(p_I\cnd p_J)$ differ from corresponding $\KS^{s'}(x_I\cnd x_J)$ by at most $O(\log m)$ and therefore change (when $s'$ is in $[u+O(n),v]$) at most by $O(\log m)$, since $\KS^{s'}(x_I\cnd x_J)$ do not change at all. It remains to apply the result quoted earlier to $p_1,\ldots, p_k$. Note that the lengths of $p_1,\ldots,p_k$ are $O(m)$, that $\KS^{s'}(x_I\cnd x_J)$ are $O(\log m)$ close to $\KS^{s'}(p_I\cnd p_J)$, and that the remaining interval $[u+O(n),v]$ has length at least $cn$ for any constant $c$ if $c'$ is large enough.
\end{remark}

\section{Discussion}\label{sec:discussion}

\subparagraph*{Increasing the density.} Theorem~\ref{th:general2} says that the space-bounded version of the inequality (that is true in the unbounded version) is valid for the sequence of space bounds $s_j$ that is not very sparse: $s_{j+1}\le s_j+O(n^2)$.  Is it possible to improve this result and show that the inequality in question is true for ``more dense'' sequence of space bounds?

\subparagraph*{Space-bounded versions of other results.} Our results are part of the space-bounded version of algorithmic information theory. In general, one could take some notion or theorem of algorithmic information theory and look for its space-bounded counterpart. For Muchnik's conditional codes theorem this was done by Musatov (see~\cite{musatov2014} and references therein). 

However, there are many problems in this approach. For example, if we define  mutual information with space bound $s$ in a natural way as
\[
I^s(a:b) = \KS^s(a)- \KS^s(a\cnd b),
\]
this notion is not monotone; a priori the mutual information can oscillate when $s$ increases. It would be interesting to understand what kinds of oscillations are possible. Is it possible that two strings are mutually independent for some space bound, then dependent for some larger bound, then again independent, and so on? Also the relations between $I^s(a:b)$, $I^s(b:a)$ and the symmetric expression $\KS^s(a)+\KS^s(b)-\KS^s(a,b)$ are unclear.

\subparagraph*{Time-bounded versions.} We can try a similar approach for time bounds (instead of space bounds). It also works, but the natural bound in the formula for complexity of pairs multiplies the time complexity by $2^{O(n)}$; also the simulation would increase time significantly (for a one-tape machine the simulation of $t$ steps needs more than $t^2$ time). When we iterate these bounds $O(n)$ times, we get ridiculously high time bounds. It is just good luck that Sipser's trick for space bounds allows us to get some reasonable space bounds, and for time bounds things are much worse. Still one can have some versions of our results with computable (though ridiculously large) time bounds.

\end{document}